\def\S{\mathcal{S}}
\def\Re{\mathbf{R}}
\def\ep{\varepsilon}
\def\w{\omega}
\def\da{\delta}
\def\phi{\varphi}
\def\ul{\underline}
\def\os{\emptyset}
\newcommand{\df}[1]{\textit{#1}}
\newcommand{\abs}[1]{ \left | #1 \right | }
\newtheorem{theorem}{Theorem}
\newtheorem{proposition}{Proposition}
\newtheorem{lemma}{Lemma}
\theoremstyle{remark}
\newtheorem*{remark}{Remark}
\newtheorem{example}{Example}
\def\cite{\citet}
\title{Stable Allocations in \\ Discrete Exchange Economies\thanks{We thank Fuhito Kojima, John Nachbar, and Leo Nonaka for helpful comments.}}
\author{Federico Echenique\thanks{Department of Economics, UC Berkeley, {fede@econ.berkeley.edu}. Echenique thanks the National Science Foundation for its support through the grants SES 1558757 and  CNS 1518941.}, Sumit Goel\thanks{Division of Social Sciences, New York University, Abu Dhabi, {sumitgoel58@gmail.com}}, and SangMok Lee\thanks{Department of Economics, Washington University in St. Louis, {sangmoklee@wustl.edu}}}
\date{\today}
\begin{document}

\maketitle

\begin{abstract}
We study stable allocations in an exchange economy with indivisible goods. The problem is well-known to be challenging, and rich enough to encode fundamentally unstable economies, such as the roommate problem. Our approach stems from generalizing the original study of an exchange economy with unit demand and unit endowments, the \emph{housing model}. Our first approach uses Scarf's theorem, and proposes sufficient conditions under which a ``convexify then round'' technique ensures that the core is nonempty. The upshot is that a core allocation exists in categorical economies with dichotomous preferences. Our second approach uses a generalization of the TTC: it works under general conditions, and finds a solution that is a version of the stable set. 
\end{abstract}

\section{Introduction}

Economists have a good understanding of stable allocations in discrete exchange economies under some popular, but very special, assumptions. The literature with quasi-linear preferences, or transfers, is highly developed. Almost everything we know about discrete exchange economies is in the literature on auctions and pricing, under the assumption of quasi-linear preferences. The literature refers to these models as markets ``with money,'' and money is ubiquitous in mechanism design.\footnote{The concept of money is often misunderstood. Of course, actual economies feature money, but in the models with transfers, money is treated as a consumption good, and one that enters utility linearly. It is not paper money, which typically needs some sort of trading friction to have value. Money as consumption good should be thought of as a summary of consumption in the rest of the economy; and even under this interpretation, the quasi-linear assumption rules out risk aversion and income effects.} 
Without transfers, most progress is limited to models of unit demand; the so-called housing market of \cite{shapley1974cores}.\footnote{We focus on one-sided, object allocation, models. Not on two-sided markets.} The general discrete multi-good allocation problem is, however, known to be very difficult. The present paper is an attempt at furthering our understanding of this difficult problem. 

The discrete exchange economy is important to understand for conceptual and
theoretical reasons, and because it covers important practical applications. First,
theory. The exchange economy is our most basic model of trade, in which agents' motives for mutually advantageous trade are captured. The model is very well
understood (and taught to every student of economics) under the technical assumption of infinitely divisible goods. Indeed, in an economy with infinitely divisible goods, standard assumptions of convexity and continuity suffice to establish the existence of various solution concepts (see chapters 15-17 in \cite{mwg1995}). Many important questions about the structure of equilibria, connection between different solution concepts and their welfare implications, as well as the scope of general equilibrium theory, are all well understood. Without the assumption of infinitely divisible goods, much less is known about the basic model of exchange. So we think that it is conceptually very important to better understand models of discrete multi-goods markets. Put simply (if dramatically): \textit{The profession's understanding of markets and exchange is limited by the extent of our understanding of the general model with indivisible goods.}

Pure theory aside, some important applications rely on a better understanding of the general discrete exchange economy. Perhaps the most glaring class of applications are multi-item auctions for agents that are not risk-neutral over monetary transfers. The most successful practical application is probably to course bidding: \cite*{krishna2008research,sonmez2010course}, \cite{budish2012multi}, \cite*{budish2017course}, and \cite{othman2010finding}.\footnote{An auction is a special case of a discrete exchange economy in which a special agent (seller) is endowed with items and all other agents are endowed with (discrete) money. A course allocation is also an exchange economy if different groups of students are prioritized by different courses by seniority and majors.}

\subsection{Overview of results}

The thrust of our paper is to propose sufficient conditions that ensure the existence of some notion of stability in discrete multi-good exchange economies. The question of the existence and structure of competitive equilibria is obviously important, but not the focus of our paper. We deal exclusively with game-theoretic bargaining paradigms (within discrete exchange economies) and discuss variations of core stability.

Our analysis is based on the two techniques that \cite{shapley1974cores} used to prove
the nonemptiness of the weak core in the housing market. The first technique
establishes that the housing market reduces to a non-transferable utility (NTU) cooperative game satisfying
Scarf's balancedness condition (\cite{scarf1967core}), implying a nonempty weak
core. The second technique, Gale's Top Trading
Cycles (TTC) algorithm,  constructs a core allocation. We generalize the TTC
algorithm to discrete exchange economies.

Using Scarf's theorem, we first establish conditions under which a randomized
allocation is stable: what we call the random core. The random core is of independent
interest, and comprises allocations that would be stable, were it not for the
discrete nature of our model. In the housing model, Shapley and Scarf show that the random core allocation may be ``rounded'' so as to obtain a core allocation. We show that the same rounding technique is not viable in a general exchange economy, but that under some added assumptions, there is a rounding algorithm that delivers a core allocation. 

The upshot is that the core in \textit{categorical} discrete economies is nonempty.  Goods are grouped
into categories, and agents may consume at most one good of each category: The
\textit{house-car-boat model}, to use the language of
\cite{moulin2014cooperative}. Agents have additively-separable and dichotomous
utilities over categorical consumption. Under these assumptions,
Theorem~\ref{thm:categorical_economy} states that the weak core is nonempty. The
proof proceeds by showing that the random core is nonempty, and that our rounding
algorithm constructs a core allocation starting from a random core allocation. 

Our second technique is based on a generalization of the TTC algorithm, using the ideas in
the Tarski-algorithm that has been applied extensively in the literature on stable
two-sided matchings. First we verify that, when applied to a standard housing model,
our algorithm replicates the TTC (actually iteration by iteration) and finds a weak core allocation. Then we consider its behavior in a general discrete exchange economy. We propose a ``locally absorbing set'' with the idea that 
a block to an allocation in the set would trigger a sequence of forward-looking blocks that would end with a (possibly different) allocation in the same set. The set of all allocations is already locally absorbing, so we find a minimal locally-absorbing set.

\subsection{Related Literature}\label{sec:literature}

\cite{shapley1974cores} introduce the model of a housing market, which has been studied very extensively. It is a special case of our model, when agents have unit demands and are endowed with a single good. Their existence proof relies on Scarf's sufficient condition, but they note that a simpler and constructive argument is possible by means of TTC algorithm. The literature following up on Shapley-Scarf, and analyzing the housing market, is huge. \cite{ROTH1977131}, for example, discuss the differences between strong and weak blocks, and clarify the relation between the core of the housing market and the competitive equilibrium allocations. \cite{ma1994strategy} and \cite{sonmez1999strategy} study the incentive properties of core allocations. \cite{sonmez2010house} and \cite*{roth2004kidney} apply the model in practical market-design settings.

The literature on discrete exchange economies is also significant. With no pretense of going through an exhaustive review, we can mention that \cite{henry1970indivisibilites} is mainly focused on the existence problems for competitive equilibrium. He shows that, unless very restrictive assumptions are imposed, an equilibrium is not going to exist. A number of papers seek to overcome these negative results by considering discrete economies in which there exists one perfectly divisible good, a \textit{numeraire}: Perhaps the first paper in this setting is \cite{mas1977indivisible}, who proposes a model in which he can show the existence of competitive equilibria. \cite{quinzii1984core} shows that the core is nonempty in a housing model with unit demand, as well as proving a core equivalence theorem. \cite{svensson1983large}, also imposing unit demand, obtain the existence of allocation with various equilibrium and normative properties. More recently, \cite*{baldwin2020equilibrium} exhibit a connection between a model with transferable utility, and a general model of a market with income effects, which allows them to obtain equilibrium existence results. And \cite{jagadeesan2021matching} proves the existence of quasi-equilibrium allocations, which are then exploited to show that the set of stable allocations is nonempty.

Another line of attack has focused on large economies: keeping the number of goods fixed, while letting the number of agents grow large. See \cite{starr1969quasi} and \cite{dierker1971equilibrium}. \cite{mas1977indivisible} works in the continuum limit (and also, in fact, assumes a perfectly divisible good). More recently, \cite*{budish2011combinatorial} considers discrete markets and shows that existence and incentive properties are alleviated when the number of agents is large. See also \cite*{budish2017course} for a real-world implementation of these ideas to course bidding in business schools.

The model of a categorical economy that we cover in Section~\ref{sec:results} is introduced by \citet{moulin2014cooperative}, who presents as an open question the determination of whether the core is empty. \cite*{konishi2001shapley} provide an answer in the form of a non-existence example (we revisit this example in Section~\ref{sec:results}). Our contribution is to find sufficient conditions on a categorical economy, namely dichotomous preferences for each category of good, under which the core is nonempty.\footnote{Preferences are dichotomous when an agent views each object as either acceptable or unacceptable. Dichotomous preferences are subsumed by lexicographic preferences. Notable applications are compatibility-based organ or blood exchange models \citep{roth2004kidney, han2022blood}.} Closer in spirit to our exercise, 
\cite{sikdar2017mechanism} provides a lexicographically separable preference assumption, for the nonempty core. We assume dichotomous preferences over individual items, while some studies assume dichotomous preferences over \emph{bundles} to obtain a nonempty core \citep{ergin2017dual, nicolo2019matching}. Dichotomous, trichotomous, and even $m$-chotomous preferences over bundles have been considered in the design of mechanisms that satisfy incentive and efficiency properties \citep{sonoda2014two, aziz2020strategyproof, manjunath2021strategy, andersson2021organizing}.

If we think of blocks as objections, our locally absorbing set has a flavor of the bargaining set (\cite{aumann1961bargaining}), which comprises allocations for which any objections are subject to counterobjections. 
There are several different variations on the notion of a bargaining set, such as \cite{mascolellbargaining89} and \cite{zhou1994new}. There are known, general, game-theoretic existence results, such as those of \cite{peleg1967existence} in the setting of transferable utility, and \cite{peleg1963withoutsidepayments} or \cite{VOHRA199119} for games without transfers. Our model does not satisfy the assumption in these papers: we do not have transferable utility, nor are the convexity assumptions in \cite{peleg1963withoutsidepayments}, or the balancedness condition of \cite{VOHRA199119}, satisfied. We also want to emphasize the algorithmic and constructive nature of our existence result for the absorbing set, while the existing literature often uses non-constructive, topological, fixed-point arguments.

\section{Model}\label{sec:model}

An \df{economy} is a tuple $E=(O,\{(v_i,\w_i):i\in A \})$ in which
\begin{itemize}
\item $O$ is a finite set of \df{objects};
\item $A$ is a finite set of \df{agents};
  \item each agent $i\in A$ is described by a utility function
    $v_i:2^O\to\Re\cup\{-\infty \}$ and a nonempty endowment $\w_i\subseteq O$, with
    $O=\cup_i \w_i$ and $\w_i\cap\w_j=\os$ when $i\neq j$.
\end{itemize}

We allow utilities to take on the value $-\infty$ in order to encode
agents' consumption space through the domain of the utility
function. For example, we may consider a \df{shoe economy} in which
agents desire a pair of left and right shoes. The set of objects when we have three
pairs of shoes is:
$O=\{\ell_1,\ell_2,\ell_3,r_1,r_2,r_3 \}$, where $\ell_i$ is a left shoe and
$r_i$ a right one. Now we can say, for example, that
$v_i(\{\ell_1,r_2\})=10>7=v_i(\{\ell_3,r_1\})$, while
$v_i(\{\ell_1,\ell_2\})=-\infty$ says that consuming two left shoes is
not allowed in the model.

More conventionally, the housing model of \cite{shapley1974cores} is a special case of our model. Suppose that $O=\{h_i:i\in A\}$ contains exactly one house $h_i$ for each agent $i$ in $A$; suppose that $\w_i=\{h_i\}$, so that $i$ owns the $i$th house, and  let $v_i(X)=-\infty$ when $X$ is not a singleton subset of $O$.

The following definitions are standard: An \df{allocation} in the economy $E$ is a pairwise-disjoint collection of sets of objects $\{X_i:i\in A\}$ with the property that $\cup_i X_i \subseteq O$. A nonempty subset $S\subseteq A$ is termed a \df{coalition}. For a coalition $S$, an $S$-allocation is a pairwise-disjoint collection of sets of objects $\{X_i:i\in S\}$ with the property that $\cup_{i\in S} X_i \subseteq   \cup_{i\in S}\w_i$. We think of an allocation as the outcome of exchange among the agents in the economy, and of an $S$-allocation as the outcome of exchange among the members of the coalition $S$.

Aside from their use in describing feasible consumption, utility functions summarize agents' ordinal preferences. Given an allocation $X=\{X_i:i\in A\}$, we say that the coalition $S$ \df{weakly blocks} $X$ if there exists an $S$-allocation $\{X'_i:i\in S\}$ with $v_i(X'_i)\geq v_i(X_i)$ for all $i\in S$, and  $v_i(X'_i)> v_i(X_i)$ for at least one $i\in S$. In contrast,  $S$
\df{strongly blocks} $X$ if there exists an $S$-allocation $\{X'_i:i\in S\}$ with $v_i(X'_i)> v_i(X_i)$ for all $i\in S$.

The \df{weak core} of the economy $E$ is the set of allocations that are not strongly blocked by any coalition, while the \df{strong core} is the set of allocations that are not weakly blocked.

Our first example illustrates the difference between the weak and the strong core, and shows how the strong core may be a proper subset of the weak core, even with strict preferences.

\begin{example}[Weak core $\neq$ strong core with strict preferences] 
\label{ex:weak-vs-strong} Consider a shoe
  economy with three agents: $A=\{1, 2, 3\}$ and endowments $\w_i=\{(\ell_i, r_i)\}$,
  $i \in A$. Assume ordinal preferences as follows:
\begin{center}
\begin{tabular}{lll}
Agent 1 & Agent 2     & Agent 3   \\\hline
$(\ell_2, r_3)^X$     & $(\ell_3, r_1)^X$ & $(\ell_1, r_2)^{X,Y}$ \\
$(\ell_3 , r_1)^Y$ & $(\ell_2, r_3)^Y$ & $(\ell_3, r_3)$\\
$(\ell_1, r_1)$     & $(\ell_2, r_2)$ & $\cdots$ \\
$\cdots$ & $\cdots$ & $\cdots  $
\end{tabular}
\end{center}
The table describes the relevant parts of each agent's ordinal preference, or
ranking, over sets of objects.
Agents have no indifferences. 
The table uses superscripts to identify two allocations. The first allocation is 
$X_1=(\ell_2, r_3),$ $X_2=(\ell_3, r_1)$, and $X_3=(\ell_1, r_2)$. The second is 
$Y_1=(\ell_3 , r_1)$,  $Y_2=(\ell_2, r_3)$ and $Y_3=(\ell_1, r_2)$. It is easy to see that the strong core consists only of allocation $X$, while the weak core contains both. Key here is that, in using $X$ to block $Y$, agents $1$ and $2$ need to trade items that belong to agent 3, which requires 3's ``permission.'' A weak block only requires 3's weak preference for them to grant permission, while the strong block insists on a strict preference.
\end{example}

Example \ref{ex:weak-vs-strong} suggests that the weak core can sometimes be too large. To address this issue, some studies have proposed refinements \citep{balbuzanov2019endowments, yilmaz2022stability}. 
However, our paper addresses the opposite problem: the weak core is frequently too small, if not completely empty. Our second example exhibits a shoe economy with an empty weak core.
\begin{example}[Empty weak core in a shoe economy.]\label{ex:shoes}
 $A=\{1, 2, 3\}$ with endowment $\{(\ell_i, r_i)\}_{i=1,2,3}$. Assume preferences such that
\begin{center}
\begin{tabular}{ccc}
Agent 1 & Agent 2     & Agent 3   \\\hline
$(\ell_1, r_2)^X$     & $(\ell_2, r_3)^Y$ & $(\ell_1, r_3)^Z$ \\
$(\ell_3 , r_1)^Z$ & $(\ell_2, r_1)^X$ & $(\ell_3, r_2)^Y$\\
$(\ell_1, r_1)^Y$ & $(\ell_2, r_2)^Z$ & $(\ell_3, r_3)^X$\\
$\cdots$ & $\cdots$ & $\cdots  $
\end{tabular}
\end{center}
The table depicts the bundles that each agent regards as at least as good as their endowment, and exhibits three allocations by means of superscripts. For example, the allocation $\{X_1,X_2,X_3\} =\{(\ell_1, r_2),(\ell_2, r_1),(\ell_3, r_3)\}$ results from agents 1 and 2 trading right shoes, while agent 3 consumes her endowment.
\footnote{Trading of right and left shoes might suggest a version of the TTC for each type of shoe. Example~\ref{ex:shoes} shows that this approach does not work. It does not produce a core allocation, even for economies with additively-separable preferences over shoes (\cite{konishi2001shapley, biro2022serial, feng2022preference}).}

Now it is easy to verify that the weak core is empty. The allocation $X$ is blocked by agents 2 and 3, using the allocation $Y$ (in fact, using a $\{2,3\}$-allocation). The allocation $Y$ is blocked by agents 1 and 3 using the allocation $Z$, which is in turn blocked by agents 1 and 2 by means of allocation $X$.
\end{example}

Some of our results will involve limits on the size of the possible blocking coalitions. An allocation is \df{pairwise stable} if it is individually rational and not strongly blocked by any coalition of size two. The motivation for considering such laxer stability concepts is that it may be difficult for agents to coordinate blocks among large coalitions. Pairwise stability is the most basic collective bargaining model, focusing on the smallest possible non-trivial blocking coalitions.  

When we want to talk loosely, and informally, about allocations that are robust to blocking by some family of coalitions, we shall simply call them \df{stable}.

The problem we are facing is, in a sense, harder than the roommate problem, for which stable allocations are known not to exist. We may interpret the trades between pairs of agents in Example~\ref{ex:shoes} as pairs in the roommate problem. No configuration of pairwise trades, translated into a configuration of couples and singles, is stable. For example, if agent 1 pairs up with 2, and leaves agent 3 ``single,'' (consuming their endowment) then 2 and 3 can block by pairing up instead. Stability in the roommate problem is famously challenging, and the logic behind the non-existence of stable outcomes in the roommate problem \emph{is the same} as in our ``shoe economy'' example. This means that the model of a discrete exchange economy is rich enough to encode some of the best-known examples of instability in game theory.

\cite{ehlers2007neumann}, \cite{mauleon2011neumann}, and  \cite{ehlers2020legal} consider versions of the von-Neumann-Morgenstern stable sets in assignment problems. Stable sets are also an area of interest for the discrete economies that we consider in our paper. Example~\ref{ex:shoes} shows that the stable set can be empty. If we focus on individually rational allocations, we see that the endowments and three alternative allocations $\{X,Y,Z\}$ should be considered. The three allocations are cyclically blocked by each other: $X$ is blocked by $Y$, which is blocked by $Z$, and in turn by $X$. If a stable set includes allocation $X$, then it must include the allocation $Z$ to satisfy external stability. This, however, violates internal stability. So one would have to impose additional assumptions in order to obtain the existence of a stable set.

\section{Core in a categorical economy}
\label{sec:results}

In light of our previous discussion and examples, it seems impossible to obtain a general existence result for the core. The model is arguably rich enough to replicate any problematic behavior that can be exhibited in a NTU game --- such as the non-existence of stable outcomes in the roommate problem. The situation is, however, not hopeless. There is space to add structure in ways that can ensure existence. We propose sufficient conditions on the primitives of an economy that allow us to prove the existence of stable outcomes. 

Towards a formulation of sufficient conditions, it is useful to recall the arguments
that \cite{shapley1974cores} use for the housing model. Shapley and Scarf use two very different arguments to show that the weak core is nonempty in the housing model: Gale's TTC algorithm and Scarf's
theorem on the core of NTU games. The TTC obtains a core allocation constructively for the housing model, but extending the algorithm to a more general environment poses a challenge, because the algorithm only allows for the trading of the same number of items among agents at a time. In a general exchange economy, a core allocation might involve an agent giving one item to another while receiving multiple items from different agents. We turn to a generalization of TTC in Section~\ref{sec:Talgorithm}. Instead, the focus in the current section is on sufficient conditions that allow us to use Scarf's theorem to show the existence of nonempty weak core.  To this end, we first review Scarf's theorem and discuss its application to the housing model.

\subsection{Scarf's theorem.}\label{sec:scarfsthm}

Given an economy $E = (O, \{(v_i, \w_i): i\in A\})$, we define a NTU game $(A,V)$ by defining the set $V(S)\subseteq\Re^{A}$ for each nonempty $S \subseteq A$ to be the set of all vectors $u\in\Re^A$ for which there exists an $S$-allocation $\{X_i : i \in S\}$ with $u_i\leq v_i(X_i)$ for $i \in S$. Lastly, we let  $V(\emptyset) = \{0\}$.


A family of subsets of $A$, denoted by $\S$, is said to be a \df{balanced collection of coalitions} if there exist non-negative weights $(\delta_S)_{S \in \S}$  such that
$$\sum_{\{S\in \S : S\ni i \}} \da_S=1 \text{ for all } i \in A.$$
A NTU game $(A,V)$ is \df{balanced} if for every balanced collection of coalitions $\S$ and $u \in \Re^A$,  $$u \in \cap_{S \in \S} V(S) \implies u \in V(A).$$

\begin{lemma}[\cite{scarf1967core}]\label{lem:scarf}
A balanced NTU game has a nonempty weak core.
\end{lemma}

\label{page:editorscarf}
\cite{shapley1974cores} show that the NTU game defined by the housing model is balanced, and therefore has a nonempty weak core. Their proof consists of two key steps: first, they establish a \df{fractional} allocation that attains  ``target utilities.'' Then, in the second step, they \df{round} the fractional allocation to obtain an integral allocation, while preserving the agents' target utilities.

It helps to discuss this proof approach in more detail, as we propose a way to generalize it to a multi-item model. Let $\S$ be a balanced collection of coalitions with weights $(\da_S)_{S\in \S}$. We regard any $u\in\cap_{S \in \S} V(S)$ as target utilities. Notice that each $S \in \S$ has an $S$-allocation $\{X_i : i\in S\}$ such that $v_i(X_i) \geq u_i$ for $i \in S$. So the target utilities are achieved in trades within the coalitions that make up the balanced collection. 

Given such coalition-specific allocations, we may consider a candidate economy-wide allocation by using weights $(\da_S)_{S\in \S}$ and calculating a weighted average. Let $P_S$ be a $\abs{A} \times \abs{O}$ matrix describing the $S$-allocation $\{X_i : i\in S\}$ so that $P_S(i,j)=1$ if agent $i$ receives object $j$ and $0$ otherwise (for simplicity, suppose that each agent in $S$ receives one object in the $S$-allocation). Then,
\[
P = \sum_{S\in \S} \da_S P_S
\] is a ``fractional'' economy-wide allocation.\footnote{\label{fn:fractional_matrix}Here is a proof: An agent $i$ receives an object if $i \in S$, hence $\sum_j P (i,j) = \sum_{S \in \S; S \ni i} \delta_S (\sum_j P_S(i,j)) = 1$. Similarly, an object $j$ is allocated to an agent when the owner of $j$, say $i'$, is in $S$, so $\sum_{i} P(i, j) = \sum_{S \in \S; S \ni i'} \delta_S (\sum_{i} P_S(i, j)) = 1$.} If $P(i,j) > 0$, then $P_S(i,j) = 1$ for some $S \in \S$, so agent $i$ obtains at least the target utility $u_i$ from object~$j$. 


We may interpret $P$ as \textit{random allocation}: by the Birkhoff-von Neumann theorem, the matrix $P$ can be written as a lottery over assignments of objects to agents. In a model that allows random allocations, and where agents have expected-utility preferences for lotteries over objects, each $P_S$ could represent a random allocation for coalition $S$, ensuring the target expected utility $u_i$ for agent $i \in S$. Then $P$ represents a random allocation ensuring the target expected utility $u_i$ for agent $i \in A$, establishing $u \in V(A)$. According to \cite{shapley1974cores}, the first step would already establish that the NTU game is balanced. There is a  nonempty weak random core.

In a discrete allocation problem, however, random allocations are not allowed. We must \textit{round} the fractional matrix $P$ to obtain a deterministic allocation that ensures that agents obtain their target utilities. This is the second step in \cite{shapley1974cores}. Rounding is easy for the housing model. According to the Birkhoff–von Neumann Theorem, $P$ can be expressed as a weighted average of integral matrices, and by inspecting the proof of this theorem one can see that target utilities are preserved.  Thus, there is a deterministic allocation with a matrix $\overline{P}$ such that, if  $\overline{P}(i, j) = 1$, then $P (i,j) > 0$, which implies that agents attain their target utilities. Thus, $u \in V(A)$; so the game is balanced, and the weak core is nonempty by Lemma~\ref{lem:scarf}.

\subsection{Additively separable preferences}
A natural extension of the housing model is the categorical economy (a
``house-car-boat economy''). An economy $E=(O,\{(v_i,\w_i):i\in A \})$ is
\df{categorical} if there exists a number $K$ of \df{categories of objects} and sets
of objects $O^k$, for each category $k\in \{1\ldots  K\}$ so that agents has a unit
demand in each category. There are no restrictions on the endowments (for example,
one agent could be endowed with all objects in one category).

As mentioned in the introduction, \citet{moulin2014cooperative} proposes the model of
a categorical economy, and leaves open the question of whether the core is empty. But
a left-right shoe economy is a special case of a categorical economy, and we observed in
Example~\ref{ex:shoes} that the weak core in a shoe economy may be empty.  Our goal is to find conditions on preferences such that the NTU game defined by a categorical economy is balanced, and therefore has nonempty weak core.

Recall our discussion of \cite{shapley1974cores}. We shall mimic the approach that they pursue. First we consider the possibility of a nonempty weak random core. It turns out that additively separable preferences ensure that there is a random core allocation. Then we turn to a rounding algorithm that will produce a non-random core allocation out of the random core allocation. This second step, we shall see, requires further assumptions on agents' preferences.

For nonempty $S \subseteq A$, a random $S$-allocation $\sigma_S$ is a probability
distribution (a lottery) over $S$-allocations. We assume that agents have a von-Neumann-Morgenstern expected utility, which we also denote by $v_i(\sigma_S)$. A coalition $S$ strongly blocks a random allocation $\sigma$ if there exists a random $S$-allocation $\sigma_S$ such that $v_i(\sigma_S) > v_i(\sigma)$ for all $i \in S$. The weak random core is the set of random allocations that are not strongly blocked by any coalition $S$.

The random core of a categorical economy may still be empty. In Example~\ref{ex:shoes}, with two categories (left shoes and right shoes), suppose that each agent's vNM utilities are 3, 2, and 0 for the three acceptable pairs, respectively, and that they receive a utility of $-\infty$ for any unacceptable pair. A random allocation with a probability distribution $(p_X, p_Y, p_Z)$ over the three individually rational allocations must satisfy the following inequalities. For agent 1: $p_X \cdot 3 + p_Z \cdot 2 + p_Y \cdot 0 \geq 2$, for agent 2: $p_Y \cdot 3 + p_X \cdot 2 + p_Z \cdot 0 \geq 2$, and for agent 3: $p_Z \cdot 3 + p_Y \cdot 2 + p_X \cdot 0 \geq 2$. Adding up both sides of these inequalities results in $5 \geq 6$. So the random core is empty.

Given the message of the shoe economy in Example~\ref{ex:shoes}, we must further strengthen the
assumptions we have placed on preferences. We consider utilities that are additively separable over categories. Each agent $i$ is endowed with a utility $v_i^k: O^k \to \Re$ for each category $k$ such that $v^k_i(\emptyset) = 0$; and $v_i(X_i) = \sum_{k} v^k_i(X_i \cap O^k)$. Here we adopt the convention that $v^k_i(X_i \cap O^k)= \max_{o \in X_i \cap O^k} v^k_i(o)$ if there is some object $o \in X_i \cap O^k$ with $v^k_i(o) > 0$, and $0$ otherwise.

\begin{proposition}
\label{prop:fractional_core}
A categorical economy with additively separable utilities has nonempty weak random core.
\end{proposition}

\begin{proof}
The proof is an application of Lemma~\ref{lem:scarf}. Consider the NTU game $(A, V)$ such that for nonempty $S \subseteq A$, $V(S) \subseteq \Re^A$ is the set of $u \in \Re^A$ for which there exists a random $S$-allocation $\sigma_S$ with $u_i \leq v_i(\sigma_S)$ for $i \in S$. 

Let $\S$ be a balanced collection of coalitions with weights $(\da_S)_{S\in \S}$ so that $\sum_{\{S\in \S : S\ni i \}} \da_S=1$. Take any $u \in \cap_{S \in \S} V(S)$. For each $S \in \S$, there is a random $S$-allocation $\sigma_S$ such that $v_i(\sigma_S) \geq u_i$ for $i \in S$. The random $S$-allocation $\sigma_S$ defines a $\abs{A} \times \abs{O}$ fractional matrix $P_S$ such that $P_S(i,j)$ is the probability that object $j$ is allocated to agent $i$. We may write $P_S$ by collecting the columns that correspond to goods of the same category: $P_S=[P_S^1 | P_S^2 |\dots|P_S^K]$, with each $P_S^k$ being a fractional allocation of objects of category $k$. 

Finally, $P = \sum_{S \in \S} \delta_S P_S$. Observe that 
$P=[P^1 | P^2 |\dots|P^K]$, where each $P^k = \sum_{S \in \S} \delta_S P^k_S$. Just as in the proof of Lemma~\ref{lem:scarf}, we may apply the Birkhoff-von Neumann theorem and interpret each $P^k$ as a random allocation, a lottery over economy-wide assignments of category-$k$ objects to agents.  Then, $P$ can be written as the product of the lotteries $P^1, P^2, \dots, P^K$.

Since agents' vNM utilities are additively separable across categories,
\begin{align*}
v_i(P)
& =  \sum_k \left(\sum_{o_j \in O^k} v_i(o_j) P^k (i,j) \right)   = \sum_k \sum_{o_j \in O^k} v_i(o_j)  \sum_{\{S \in \S: S \ni i\}} \delta_S  P^k_S (i,j) \\
& = \sum_{\{S \in \S: S \ni i\}} \delta_S \left( \sum_k \sum_{o_j \in O^k} v_i(o_j) P^k_S (i,j) \right) = \sum_{\{S \in \S: S \ni i\}} \delta_S v_i(\sigma_S) \\ 
&\geq u_i.
\end{align*}
Hence, $u \in  V(A)$. Thus the game $(A, V)$ is balanced, and the weak random core is nonempty.
\end{proof}

\subsection{Dichotomous preferences.}

A categorical economy with additively separable preferences has a nonempty random core. Unfortunately, the (deterministic) weak core can still be empty. We proceed with a counterexample. 

\begin{example}\label{ex:KQW23} Consider 
an economy with $K=2$ categories, in fact a left-right shoe economy. Suppose that there are four agents with ordinal preferences given as follows:
$$
\begin{array}{l}
(l_4,r_1) \succ_{1}(l_4,r_2) \succ_{1}(l_1,r_1) \succ_{1} ...\\
(l_2,r_1) \succ_{2}(l_1,r_1) \succ_{2}(l_2,r_3) \succ_{2}(l_2,r_2) \succ_{2} ...\\
(l_3,r_4) \succ_{3}(l_3,r_2) \succ_{3}(l_4,r_4) \succ_{3}(l_3,r_3) \succ_{3} ...\\
(l_3,r_4) \succ_{4}(l_3,r_3) \succ_{4}(l_2,r_4) \succ_{4}(l_4,r_4) \succ_{4} ...\\
\end{array}
$$

This example coincides with Example 2.3 in \cite{konishi2001shapley}, who show that the weak core in this game is empty. We shall not prove this fact here, and instead refer the reader to Konishi et al.\ for a proof. The point we do wish to make is that the example is consistent with additively separable preferences. 

The next table provides an additively separable utility representation:
\begin{center}
\begin{tabular}{c|cccc}
$(v^l_i, v^r_i)$ & Agent 1 & Agent 2    & Agent 3  & Agent 4 \\\hline
 $(l_1, r_1)$ & $1, 3$ & $3, 5$ & $0, 0$ & $0, 0$ \\
$(l_2, r_2)$ & $0 , 2$ & $5, 1$ & $0, 3$ & $2,0$\\
 $(l_3, r_3)$ & $0, 0$ & $0, 2$ & $5, 1$ & $5, 3$\\
 $(l_4, r_4)$ & $3, 0$ & $0, 0$ & $2, 5$ & $1, 5$\\
\end{tabular}
\end{center}
For instance, agent $1$'s ordinal preference $(l_4, r_1) \succ_1 (l_4, r_2)$ is represented by $v^l_1(l_4) + v^r_1(r_1) = 3 + 3 > v^l_1(l_4) + v^r_1(r_2) = 3 + 2$.
\label{ex:add_separable}
\end{example}

Given Example~\ref{ex:KQW23}, there is no hope of obtaining a nonempty core in the
categorical economy, even with additively separable preferences. The random core is
nonempty, but the rounding step in \cite{shapley1974cores}, allowing us to turn a random core allocation into a deterministic core allocation, does not generalize. So we turn to a further restriction on agents' preferences: we consider dichotomous preferences. 

In the context of a categorical economy with additively separable utilities, where $v_i(X_i) = \sum_{k} v^k_i(X_i \cap O^k)$, dichotomous preferences mean that for each agent $i$, there is a set of objects $G_i$ such that $v^k_i(X_i \cap O^k) = 1$ if there is some object $o \in X_i \cap O^k \cap G_i$, and $0$ otherwise. Each agent classifies items as either ``acceptable'' or ``unacceptable'' goods and evaluates a bundle based on the number of acceptable goods it contains, counting at most one in each category.

A special case of our model is a ``non-categorical'' economy where each object forms a separate category. Then an agent $i$'s utility from a bundle equals the number of acceptable goods it contains: $v_i(X_i) = \abs{X_i \cap G_i}$. Models of markets and social choice with dichotomous preferences have been studied quite extensively; see, for example, \cite*{bogomolnaia2004random}, \cite*{bogomolnaia2005collective}, and \cite*{aziz2019fair}. 

\begin{theorem}\label{thm:categorical_economy} A categorical economy with additively separable and dichotomous preferences has nonempty weak core.
\end{theorem}

The proof can be found in Section~\ref{sec:proofs}. Here we proceed to describe the main ideas in the proof, and to explain how it follows along the lines of Shapley and Scarf's strategy for proving nonemptyness of the core in the housing model.

Proposition~\ref{prop:fractional_core} shows that the random core is not empty. Consider then a random core allocation $\sigma$, and let $t_i$ be the floor of (the largest integer less than or equal to) the expected utility $v_i(\sigma)$. We treat these integers $(t_i)_{i\in A}$ as \emph{target utilities}. The main idea in the proof is to round $\sigma$ so as to obtain a deterministic allocation $X$ such that $v_i(X_i) \geq t_i$ for $i \in A$.  The allocation $X$ must be in the (deterministic) weak core: If $X$ is blocked by a coalition $S$ with a deterministic $S$-allocation $(Y_i)_{i\in S}$, then $v_i(Y_i) > v_i(X_i) \geq t_i$ for $i \in S$. The utilities $v_i(Y_i)$ must be integers, due to the assumption of dichotomous preferences. Hence  $v_i(Y_i) \geq t_i + 1 > v_i(\sigma)$ for $i \in S$, contradicting the assumption that $\sigma$ is a random core allocation.

\subsection{The rounding argument.}\label{sketch:rounding}

Given our discussion so far, it should be clear that they key new idea in our result is the rounding
argument. Rounding here is much more involved than the argument in \cite{shapley1974cores}
for the housing model. We explain the proof idea for the special case in which each object forms a separate category ($\abs{O^k}=1$ for $k=1, \dots, K$), so $v_i(X_i) = \abs{X_i \cap G_i}$ for $i\in A$. See Section~\ref{sec:proofs} for a description of the rounding algorithm in the general case. 


Consider a random core allocation $\sigma$, and let $t_i$ be the floor of the expected utility
$v_i(\sigma)$. Without loss of generality, assume that target utilities satisfy $t_i\geq 1$ (the
case when $t_i=0$ may be dealt with trivially).
The random allocation $\sigma$ defines a $\abs{A} \times \abs{O}$ matrix $P$ such that $P(i,j)$ is the probability that object $j$ is allocated to agent $i$. Observe that $P$ has two properties:
\begin{enumerate}
    \item Each row $i$ sums up to at least $t_i\geq 1$.
    \item Each column $j$ sums up to at most $1$.
\end{enumerate}
We round $P$ to obtain a deterministic allocation $X$ such that $v_i(X_i) \geq t_i$ for $i \in A$. 


Let $B$ be any $\abs{A} \times \abs{O}$ matrix that satisfies these two properties. Say that an entry $b_{i,j}$ is fractional if $b_{i,j}\in (0,1)$. 

A one is an entry $b_{i,j}=1$. Suppose that there is a row $i$ that has less than $t_i$ entries that are ones. Since row $i$ adds up to at least $t_i$, and no entry can be larger than one, there must exist a column $j$ for which the entry $b_{i,j}\in (0,1)$ is fractional. Now there are two possibilities. Either $b_{i,j}$ is the only non-zero entry in column $j$, or there is another row $h$ with $b_{h,j}>0$. If the former case is true, we can increase $b_{i,j}$ to one and still obtain a matrix that satisfies the two properties but has strictly fewer fractional entries.

Suppose then that the latter case is true. Since $b_{i,j}>0$ and $B$ satisfies property (2), $b_{h,j}<1$. Now there  are again two cases. Either there are $t_h$ ones in row $h$, or not. If there are already $t_h$ ones in row $h$ then, since $b_{h,j}$ is fractional, the columns with ones cannot include $j$. So we can decrease $b_{h,j}$ without affecting the property that the entries of row $h$ sum up to at least $t_h$. We may then define the largest $\ep>0$ such that $b_{i,j}+\ep\leq 1$ and $b_{h,j}-\ep\geq 0$. Then replace $b_{i,j}$ with $b_{i,j}+\ep$ and $b_{h,j}$ with $b_{h,j}-\ep\geq 0$. We obtain a new matrix that satisfies the two properties but has strictly fewer fractional entries.

Finally, suppose that there are not $t_h$ ones in row $h$. Given that row $h$ adds up to at least $t_h$, there must exist another fractional entry $b_{h,l}\in (0,1)$.  Starting from  $b_{h,l}\in (0,1)$, repeat the argument above.

This procedure will either result in a new matrix that satisfies the two properties and has strictly fewer fractional entries, or we will obtain a cycle $(i_1,j_1),\ldots,(i_M,j_M)=(i_1,j_1)$ where \begin{itemize}
    \item $b_{i_m,j_m}\in (0,1)$ is fractional;
    \item each odd-numbered entry is in the same row as the next entry, and in the same column as the preceding one.
\end{itemize}
 Now we may add $\ep>0$ to each odd-numbered entry, and subtract $\ep$ from the following entry in the cycle. This transformation will preserve the same row and column sums, as we add and substract $\ep$ the same number of time for each row and column. Choose the largest $\ep$ that ensures that each entry is in $[0,1]$. This again results in a new matrix that satisfies properties (1) and (2) and has strictly fewer fractional entries.

The argument we laid out implies that for any matrix satisfying (1) and (2), as long as there is some row $i$ that does not have $t_i$ 
ones, there is a new matrix that satisfies properties (1) and (2) and has strictly fewer fractional entries. So there must then exist a matrix satisfying the two properties and where all rows have at least $t_i$ ones. Finally, obtaining a deterministic allocation that achieves the target utilities from this matrix is straightforward.

See Section~\ref{sec:proofs} for the general rounding argument that we use to prove the theorem.

\section{A generalization of the TTC algorithm}\label{sec:Talgorithm}

We propose an algorithm for the problem of finding stable allocations in discrete exchange economies. Aside from desirable computational properties, the algorithm has two advantages: When applied to the standard Shapley-Scarf model of a housing market, it replicates the TTC and finds a weak core allocation (Section~\ref{sec:TTTC}). In general, the same algorithm finds, under certain conditions, a set of allocations that is stable in a farsighted sense (Sections~\ref{sec:BSresults} and~\ref{sec:replica}). More specifically, the algorithm constructs a set of allocations with the property that any blocking coalition would be subject to a sequence of forward-looking (``farsighted'') blocks that would end with another allocation in the set.

We impose additional conditions on utility functions. A utility function $v_i$ is \df{injective} if $X_i\neq X'_i$ implies $v_i(X_i)\neq v_i(X'_i)$; and it is (strictly) \df{monotone} if $X_i\subsetneq X'_i$ implies $v_i(X_i) < v_i(X'_i)$. The property of injectivity corresponds to strict preferences, and is very common in matching theory and the discrete assignment literature. Allowing for indifferences presents problems, even in the simple housing market (\cite{wako1991some} shows that the strong core may be empty in the Shapley-Scarf housing model when agents can be indifferent between different houses). The monotonicity assumption as stated is actually stronger than needed. For example, for our purposes, we may think of the housing model as having monotone utilities, even though it fails the definition given above.  The key assumption is monotonicity on the appropriate domain of agents' utilities. The housing model's unit demand assumption violates the definition of monotonicity, but such violation is immaterial for our purpose as each agent obtains exactly one house in any individually-rational exchange. For simplicity, however, we assume monotonicity on the full domain.

Let $U_i=v_i(2^O)$ be the set of all possible values that $i$'s utility can take. Since $v_i$ is injective, $U_i$ is a finite set with cardinality $2^O$; in fact, each utility value $u_i\in U_i$ may be identified with a bundle $v^{-1}_i(u_i)\subseteq O$. 

We call a collection $X=\{X_i : i \in A\}$ with $\cup_i X_i \subseteq O$ a \df{preallocation}. A pairwise-disjoint preallocation is an allocation. A preallocation $X$ is \df{individually rational} if $v_i(X_i)\geq v_i(\w_i)$ for all $i$. A profile of utilities $u=(u_1,\ldots,u_n)\in U=\times_{i\in A} U_i$ is identified with a preallocation.

Given $u\in U$ and $k=1, \ldots, n$, let 
\[\begin{split}
    B^k_i(u)=\left\{u_i': \exists S\text{-allocation } X \text{ st } |S| \leq k, i \in S,  v_i(X_i)=u_i' \right. \\
    \left. \text{ and for all } j \in S \setminus \{i\};   v_j(X_j) \geq u_j \right\}. 
\end{split}\]
$B^k_i(u)$ denotes the set of utilities that agent $i$ can achieve through a trade in a coalition of size at most $k$, and that assures each of $i$'s trading partner $j$ at least the utility $u_j$. Since $i$ can form a coalition $\{i\}$, $v_i(\w_i) \in B^k_i(u)$.

Now define a function $T_k:U\to U$ by $(T_k u)_i=\max B^k_i(u)$. We use the notation $T_k u$ rather than $T_k(u)$ for the values of the function $T_k$. In words, the function $T_ku$ lets each agent $i$ obtain the best possible utility that she can achieve by forming a trade with at most $k-1$ other agents, and by ensuring that each of her trading partners enjoys at least the utility that they are guaranteed to obtain in the vector $u$. Importantly, the vector $u$ corresponds to a \emph{pre}-allocation: it could be a ``fantasy'' in which multiple agents consume the same goods. We shall see in Lemma~\ref{lem:Talgobasic}, however, that when there exist no indifferences due to injective utility functions, any fixed point of $T_k$ corresponds to an actual allocation.

In the following, 
we shall suppress the dependence of $T$ on $k$ in our notation. The algorithm we are interested in consists of iterates of the function $T$, and so we shall denote by $T^m$ the composition of $T$ with itself $m$ times. Formally, the $m$-th iterate of $T$ is defined recursively  by $T^m u = T(T^{m-1}u)$ for $m=2, 3, \dots$. The key observation is that the function $T$ is monotone decreasing, so that $u\leq u'$ implies that $Tu'\leq Tu$. In turn, this means that the composition of $T$ with itself, $T^2$, is monotone increasing. Tarski's fixed point theorem therefore applies to the mapping $u\mapsto T^2u$.

\begin{remark}
For the left-right shoe economy in Example~\ref{ex:shoes}, the $T$-algorithm works as follows. Let $\{0, 1, 2\}$ be the utility representations of each agent's three most preferred left-right shoe pairs. In particular, $u=(0,0,0)$ is the agents' utilities from consuming their endowments: $v_i(l_i, r_i) = 0$ for $i=1,2,3$. Any agent $i$ can find another agent, say $j$, such that a pairwise exchange yields the utility of $2$ for $i$, while agent $j$ obtains the utility of $1 (\geq 0)$. It follows that $Tu=(2,2,2)$, and $T(2,2,2) = T^2 u = (1,1,1)$. A further applications of $T$ obtain $T^m u =(1,1,1)$ if $m$ is an even number, and $T^m u=(2,2,2)$ if $m$ is an odd number. Thus, $(1,1,1)$ is a fixed point of $T^2$ and $(1,1,1) \leq T (1,1,1) = (2,2,2)$.
\end{remark}

\subsection{Gale's TTC algorithm}\label{sec:TTTC}

When applied to the housing market in \cite{shapley1974cores}, our algorithm mimics the TTC, \textit{iteration by iteration}.

Specifically, we consider a housing market $\{ A, O = (h_i)_{i \in A}, (\succ_i)_{i \in A}\}$. Let $v_i$ be a utility representation of $i$'s preferences $\succ_i$.  Let $\underline{u} = (v_i(h_i))_{i \in A}$ be defined by the allocation in which each agent is consuming their own house. 

We consider the function $T:U \to U$ defined above, allowing for coalitions of any size ($k=n$). An iteration of $T$ starts from the vector $\underline{u}\in U$: $T^m\underline{u}$, $m=1, 2, \dots$. Observe that $\underline{u} \leq T\underline{u}$. Since $T$ is decreasing and $\underline{u}$ is the minimum individually rational pre-allocation, $\underline{u} \leq T^2\underline{u} \leq T\underline{u}$. Then, as $T$ is decreasing, $T\underline{u} \geq  T^3\underline{u} \geq T^2\underline{u}$. Subsequently, 
\begin{equation}
    \label{eqn:semi-monotonic-T}
\underline{u} \leq T^2\underline{u}\leq T^4 \underline{u} \leq \dots \leq T^3\underline{u} \leq T\underline{u}.
\end{equation}

We consider a version of TTC that clears all cycles in each round. We index each round of TTC by $r$, and denote by $A_r\subseteq A$ the set of agents who obtain their final house in the $r$-th round of TTC. The next result states that the \emph{even} iterations of $T$ mimic the different rounds of the TTC algorithm.

\begin{lemma}\label{lem:TTC_extension}
$\forall r=1,2, \dots$ and $i \in A_r$, 
$$(T^{2r-1}\underline{u})_i= (T^{2r}\underline{u})_i= (T^{2r+1}\underline{u})_i=\dots.$$
\end{lemma}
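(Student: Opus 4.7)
I would prove this by induction on $r$, exploiting the sandwich structure in~\eqref{eqn:semi-monotonic-T}: the odd iterates are coordinatewise weakly decreasing, the even iterates weakly increasing, and every even iterate is dominated by every odd one. The goal is to show that for $i\in A_r$ these two monotone sequences meet at $v_i(h_i^r)$ by index $2r-1$, where $h_i^r$ is $i$'s TTC round-$r$ allocation. For the base case $r=1$, $h_i^1$ is $i$'s top choice in all of $O$, so the top-cycle trade gives $T\underline{u}_i = v_i(h_i^1)$. The same cycle remains feasible at iteration $2$ because the tighter constraint $T\underline{u}_j = v_j(h_j^1)$ is met with equality for every cycle member; the sandwich then propagates equality to every higher index.

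For the inductive step, assume the lemma for all $r'<r$. I first show the upper bound $T^{2r-1}\underline{u}_i\leq v_i(h_i^r)$. Take any $S$-allocation with $i\in S$ that gives $i$ a house $h$ and satisfies $v_j(X_j)\geq T^{2r-2}\underline{u}_j$ for all $j\in S\setminus\{i\}$. If $h=h_{j_1}$ with $j_1\in A_{r_1}$ and $r_1<r$, the induction hypothesis gives $T^{2r-2}\underline{u}_{j_1}=v_{j_1}(h_{j_1}^{r_1})$, forcing $j_1$ to receive a singleton weakly preferred to $h_{j_1}^{r_1}$. Since $h_{j_1}^{r_1}$ is $j_1$'s TTC best-remaining house in round $r_1$, such a singleton must be owned by some $j_2\in A_{r_2}$ with $r_2\leq r_1$. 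Iterating along the permutation cycle containing $i$ produces a non-increasing sequence of round indices, all $<r$, until the cycle closes with some $j_t\in A_{r_t}$ receiving $h_i$. TTC optimality of $h_{j_t}^{r_t}$ combined with the constraint forces $v_{j_t}(h_i)=v_{j_t}(h_{j_t}^{r_t})$, and injectivity on singletons then contradicts $i\notin A_{r_t}$. Hence $h$ is owned by an agent in $\cup_{r'\geq r}A_{r'}$, over which $v_i$ is maximized at $h_i^r$ by definition of TTC round $r$.

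For the matching lower bound, I use the combined TTC trade on $S=\cup_{r'\leq r}A_{r'}$, allocating $h_j^{r'}$ to each $j\in A_{r'}$. Earlier-round constraints are met with equality by the induction hypothesis. For $j\in A_r\setminus\{i\}$ I also need $T^{2r-2}\underline{u}_j\leq v_j(h_j^r)$; this is established by rerunning the very same cycle contradiction with $j$ in place of $i$ at iteration $2r-2$, since that argument only invokes earlier-round constraints, and those are already stabilized at index $2r-3$ by the induction hypothesis. Together these give $T^{2r-1}\underline{u}_i=v_i(h_i^r)$. The identity $T^{2r}\underline{u}_i=v_i(h_i^r)$ then follows: the same combined TTC trade remains feasible under the tighter constraints $T^{2r-1}\underline{u}_j$, which now all equal the TTC values by the induction hypothesis together with the just-proved case $r'=r$; combined with $T^{2r}\underline{u}_i\leq T^{2r-1}\underline{u}_i$ from~\eqref{eqn:semi-monotonic-T}, equality holds at index $2r$ as well. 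The sandwich in~\eqref{eqn:semi-monotonic-T} then propagates equality to every higher iterate.

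The main obstacle is the cycle-closing step in the upper bound: one must verify that the induction-hypothesis constraints on earlier-round agents truly lock their houses against $i$. The chain of weakly-preferred singletons forces round indices to be non-increasing along the permutation cycle, and closure forces some earlier-round agent to accept $h_i$. TTC optimality of each round's best-remaining house, combined with injectivity of $v_{j_t}$ on singletons, is exactly what yields the contradiction; it is also the reason the argument would fail under indifferent preferences.
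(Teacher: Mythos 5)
Your proof is correct and follows essentially the same route as the paper's: induction over the TTC rounds, showing that the odd iterates of $T$ deliver each round's TTC houses for the agents in $A_r$, and then using the sandwich chain \eqref{eqn:semi-monotonic-T} to propagate the equality $T^{2r-1}\underline{u}_i=T^{2r}\underline{u}_i$ to all later iterates. The only difference is one of detail: your chain-closure argument (tracing house owners until the cycle returns to $i$, and using strictness of preferences) makes rigorous the step the paper states informally --- that agents settled in earlier rounds will not release their houses under the constraints $T^{2r-2}\underline{u}$ --- which the paper asserts for $r=1,2$ and omits for $r>2$.
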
    
In words, agent $i$ updates their expectation (fantasy) during the first $2r-1$ iterations of $T$ and learns that the last update is achievable after the $2r$-th iteration. Furthermore, the utility profile from the TTC allocation is a fixed point of $T$. 

\begin{remark}
While there exists a literature on extending TTC, the primary focus has been on achieving a strategy-proof mechanism with additional properties, such as Pareto efficiency and individual rationality. \cite{alcalde2011exchange}, \cite{jaramillo2012difference}, \cite{aziz2012housing}, and \cite{saban2014note} have extended the TTC algorithm to a housing model allowing for indifferences. In addition, \cite{papai2003strategyproof}, \cite{papai2007exchange}, \cite{todo2014strategyproof}, and \cite{fujita2018complexity} consider trade mechanisms in a general discrete exchange economy. Our paper does not aim to obtain a strategy-proof mechanism. 

Instead, we extend the TTC algorithm to obtain allocations that satisfy some notion of stability.
\end{remark}

\subsection{Fixed points of $T$ and $T^2$}

The formulation of the function $T$ is inspired by similar constructions in the literature on two-sided matchings.\footnote{See, for example, \cite{adachi2000characterization}, \cite{ECHENIQUEA2004358}, \cite{echenique2004theory}, \cite{fleiner2003fixed}, \cite{hatfield2005matching}, and \cite{ostrovsky2008stability}.} There the two-sided nature of the problem allows for the existence of a fixed point of $T$, under standard assumptions on preferences. Trading in our model is \emph{not two sided}; so there is no hope of replicating the ideas in the two-sided models. Instead, we use an idea exploited by \cite{roth1975lattice}, \cite{echenique2007solution}, and \cite{ehlers2020legal} to work with the fixed points of $T^2$.

The following result establishes the basic properties of the fixed points of $T$ and of $T^2$. 

\begin{lemma}\label{lem:Talgobasic} Consider the function $T:U\to U$ as defined above (for an arbitrary value of $k$).
\begin{itemize}
    \item There exists $u\in U$ such that $u=T^2 u$, $u \leq Tu$, and the preallocation defined by $u$ is individually rational.
    \item If $u=T u$, then the preallocation defined by $u$, namely $\{v^{-1}_i(u_i) : i \in A\}$, is an allocation that is individually rational and not blocked by any coalition of size at most $k$.
\end{itemize}
\end{lemma}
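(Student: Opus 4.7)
Part 1 follows from Tarski's fixed-point theorem applied to $T^2$ on the finite product lattice $U=\prod_i U_i$. I would first verify that $T$ is monotone decreasing: if $u\leq u'$, any $S$-allocation witnessing membership in $B^k_i(u')$ also witnesses membership in $B^k_i(u)$ (the partner constraints $v_j(X_j)\geq u_j$ only loosen), hence $Tu'\leq Tu$ and $T^2$ is monotone increasing. Because the singleton coalition $\{i\}$ with allocation $\w_i$ is always a valid witness, we have $Tu\geq \underline{u}:=(v_i(\w_i))_i$ for every $u$, so in particular $\underline{u}\leq T\underline{u}$. Iterating $T$ and using its decreasing property yields the zigzag chain $\underline{u}\leq T^2\underline{u}\leq T^4\underline{u}\leq\cdots\leq T^3\underline{u}\leq T\underline{u}$, and finiteness of $U$ forces the increasing even-iterate sequence to stabilize at some $u^\ast=T^{2N}\underline{u}$. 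By construction $T^2u^\ast=u^\ast$, $u^\ast\geq\underline{u}$ (individual rationality of the preallocation), and $u^\ast\leq Tu^\ast=T^{2N+1}\underline{u}$.

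For Part 2, suppose $u=Tu$ and for each $i$ fix a witness $(S^i,Z^i)$ with $|S^i|\leq k$, $v_i(Z^i_i)=u_i$, and $v_j(Z^i_j)\geq u_j$ for $j\in S^i\setminus\{i\}$. The same pair certifies $v_j(Z^i_j)\in B^k_j(u)$ for every $j\in S^i$, so $(Tu)_j=u_j$ forces $v_j(Z^i_j)=u_j$; by injectivity of $v_j$, one gets $Z^i_j=X_j$ for every $j\in S^i$. Consequently $\{X_m:m\in S^i\}$ is a pairwise-disjoint $S^i$-allocation contained in $\cup_{m\in S^i}\w_m$, and strict monotonicity upgrades this containment to equality, since any leftover object in $\cup_{m\in S^i}\w_m\setminus\cup_{m\in S^i}X_m$ could be adjoined to $X_i$ and would strictly improve $u_i$, contradicting $u_i=(Tu)_i$. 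Individual rationality of the preallocation is immediate from $u\geq\underline{u}$, and unblockedness is equally immediate: a weak block by $S$ with $|S|\leq k$ in which some $i\in S$ strictly improves to $v_i(X'_i)>u_i$ would put $v_i(X'_i)$ into $B^k_i(u)$, contradicting $u_i=\max B^k_i(u)$.

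The delicate remaining step, and the one I regard as the main obstacle, is pairwise disjointness of the full profile $\{X_m:m\in A\}$. For $k=2$ the argument is short: each agent either has $S^i=\{i\}$ (forcing $X_i=\w_i$ by strict monotonicity) or has a witness $\{i,j\}$, and cross-checking $i$'s and $j$'s constraints forces the partnership to be mutual with $\{X_i,X_j\}$ partitioning $\w_i\cup\w_j$, so traders split into disjoint reciprocal couples that do not interfere with one another or with non-trader endowments. For $k=|A|$ I would argue by contradiction: if $o\in X_i\cap X_j$ with owner $\ell$, the internal consistency already established gives $\ell\in S^i\cap S^j$, $i\notin S^j$, and $j\notin S^i$. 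I would then splice $Z^i$ and $Z^j$ along the shared agents into a single $(S^i\cup S^j)$-allocation (still of size $\leq k$) that reallocates the contested object to $i$ while invoking strict monotonicity to absorb the slack in the combined endowment pool, producing an element of $B^k_i(u)$ strictly above $u_i$ and contradicting the fixed-point equation. The careful bookkeeping when the two witnesses share agents beyond $\ell$---whose bundles $X_m$ are pinned simultaneously by both witnesses---is where I expect the technical core of the argument to lie.
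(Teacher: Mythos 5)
Part 1 and the routine claims of Part 2 (pinning each partner's bundle via injectivity, exhausting each witness coalition's endowment via strict monotonicity, individual rationality, and unblockedness) are correct and essentially the paper's own argument. The genuine gap is exactly the step you flag at the end: showing that the bundles $X_m=v^{-1}_m(u_m)$ are pairwise disjoint \emph{across} different witness coalitions, i.e.\ that the preallocation is an allocation. Your proposed repair does not go through as sketched. First, it splices $Z^i$ and $Z^j$ into an $(S^i\cup S^j)$-witness, which requires $\abs{S^i\cup S^j}\leq k$; this is automatic when $k=\abs{A}$ but fails for intermediate $k$, and the lemma is stated for arbitrary $k$ while your case split covers only $k=2$ and $k=\abs{A}$. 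Second, even when the size bound holds, ``reallocating the contested object to $i$'' does not by itself produce an element of $B^k_i(u)$ above $u_i$: agent $i$ already holds $o$ in $X_i$, and removing $o$ from $j$'s bundle pushes $j$ strictly below $u_j$ by strict monotonicity, so the spliced profile is not a valid witness unless you exhibit compensating slack in the pool --- and the existence of that slack is essentially the statement you are trying to prove.

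The paper closes this gap by working inside the \emph{intersection} $S^i\cap S^j$, whose cardinality is automatically at most $k$. Because endowments are disjoint, for $h\in S^i\cap S^j$ one has $X_h\subseteq(\cup_{m\in S^i}\w_m)\cap(\cup_{m\in S^j}\w_m)=\cup_{m\in S^i\cap S^j}\w_m$, so $\{X_h : h\in S^i\cap S^j\}$ is an $(S^i\cap S^j)$-allocation; by $u_h=\max B^k_h(u)$ and strict monotonicity it must exhaust $\cup_{m\in S^i\cap S^j}\w_m$ (otherwise adjoin a leftover object to a shared agent and contradict maximality). Consequently the bundles of $S^i\setminus S^j$ are contained in $\cup_{m\in S^i\setminus S^j}\w_m$, and symmetrically for $S^j\setminus S^i$, so no object can be double-claimed in the first place; your contradiction hypothesis never arises. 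Iterating this refinement (replacing $S,S'$ by $S\cap S'$, $S\setminus S'$, $S'\setminus S$) produces a partition of $A$ on each cell of which the bundles form an allocation, and disjointness across cells is automatic because each cell's bundles are drawn from that cell's own endowments. This is precisely the content of the paper's Lemma~\ref{lem:subsetofagents}, and it is the piece your proposal leaves open.
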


\begin{remark}
Conversely, if there is no restriction on coalition sizes (meaning $k=n$), then $u$ defined by any core  allocation satisfies $u = Tu$. The proof is simple. Let $X$ be a core allocation and $v_i(X_i) = u_i$. Then, $u_i \in B_i^k(u)$ by definition of $X$, so $u_i \leq (Tu)_i$. However, $u_i < (Tu)_i$ contradicts the individual rationality, or the no blocking of the allocation $X$, so $u_i = (Tu)_i$. In general, if $k < n$, then $u$ defined by a core allocation may not satisfy $u=Tu$; the core allocation may require trade between many agents.
\end{remark}

A fixed point $u$ of $T^2$ such that $u \leq Tu$ exists by Tarski's fixed point theorem, and the set of fixed points of $T^2$ is guaranteed to contain all fixed points of $T$ ($u=Tu$ implies $u=T^2 u$). A fixed point of $T$, however, may not exist. Then, a fixed point of $T^2$ defines an allocation only for the coalition $\{i : u_i = (Tu)_i\}$.

\begin{lemma}\label{lem:subsetofagents} 
Take $u$ such that $u = T^2 u$ and $u \leq T u$, and define $A_1 = \{ i : u_i = (Tu)_i\}$. Then, the preallocation defined by $\{ v_i^{-1}(u_i) : i \in A\}$ is an $A_1$-allocation: (1) $\cup_{i \in A_1} v^{-1}_i(u_i) = \cup_{i \in A_1} w_i$ and (2) $v^{-1}_i(u_i) \cap v^{-1}_{i'} (u_{i'}) = \emptyset$ for $i,i' \in A_1$.
\end{lemma}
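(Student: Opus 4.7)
The plan is to use $Tw = u$ (which follows from $T^2 u = u$ by setting $w := Tu$), together with the fact that $u \leq w$ with equality precisely on $A_1$, to pin down the structure of the $S$-allocations witnessing $(Tw)_i = u_i$ for each $i \in A_1$.

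Fix $i \in A_1$ and let $(S, \{X_h\}_{h\in S})$ be such a witness: $|S|\leq k$, $i \in S$, $v_i(X_i) = u_i$, and $v_h(X_h) \geq w_h$ for $h \in S\setminus\{i\}$. First I would show $S\subseteq A_1$. If some $j \in S\cap A_2$, then $v_j(X_j) \geq w_j > u_j$; but because $v_i(X_i) = u_i = w_i$ (using $i\in A_1$), the same trade is admissible in the set $B_j^k(w)$, forcing $v_j(X_j) \leq (Tw)_j = u_j$ --- a contradiction. The same sandwich argument applied to every $h \in S\subseteq A_1$ gives $v_h(X_h) = w_h = u_h$, hence $X_h = v_h^{-1}(u_h)$ by injectivity. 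Strict monotonicity then rules out waste: any unassigned $o \in \bigcup_{h\in S}\omega_h$ could be appended to $X_i$ for a strict utility gain, contradicting maximality of $u_i$ in $B_i^k(w)$. Consequently $\bigcup_{h\in S} v_h^{-1}(u_h) = \bigcup_{h\in S}\omega_h$, with these canonical bundles pairwise disjoint within $S$.

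Applying this construction to every $i\in A_1$ yields a family $\{S_i\subseteq A_1 : i \in A_1\}$ with $i\in S_i$, each carrying an $S_i$-allocation whose bundles coincide with the canonical bundles $v_h^{-1}(u_h)$. In particular $\bigcup_{i\in A_1} v_i^{-1}(u_i) \subseteq \bigcup_{i\in A_1}\omega_i$, and the reverse inclusion follows from the covering equality applied to $S_\ell$ for each $\ell \in A_1$, giving claim (1).

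The main obstacle I expect is claim (2): pairwise disjointness \emph{across} witness coalitions that need not overlap. I would argue by contradiction: assume $o \in v_i^{-1}(u_i)\cap v_j^{-1}(u_j)$ with $i \neq j$, and let $\ell$ be the unique endowment-owner of $o$. The internal disjointness of the $S_i$-allocation, together with the sandwich argument applied inside $S_i$, rules out $j \in S_i$ (otherwise $j$'s bundle in $S_i$ would be forced to equal $v_j^{-1}(u_j)$, intersecting $v_i^{-1}(u_i)$ at $o$); symmetrically $i \notin S_j$. Since $v_i^{-1}(u_i)\subseteq \bigcup_{h\in S_i}\omega_h$ and $o\in\omega_\ell$, we must have $\ell\in S_i\cap S_j$; and in each witness the bundle assigned to $\ell$ is the same fixed set $v_\ell^{-1}(u_\ell)$, which therefore lies in $\bigl(\bigcup_{h\in S_i}\omega_h\bigr)\cap\bigl(\bigcup_{h\in S_j}\omega_h\bigr)$. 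I would then use the freedom to choose \emph{minimal} witnesses together with individual rationality (already obtained in Part~1 of Lemma~\ref{lem:Talgobasic}) and strict monotonicity to reduce this intersection to $\omega_\ell$, forcing $v_\ell^{-1}(u_\ell) = \omega_\ell$; propagating this rigidity back through the covering equalities for $S_i$ and $S_j$ collapses $v_i^{-1}(u_i)$ to $\omega_i$ and $v_j^{-1}(u_j)$ to $\omega_j$, contradicting $o\in\omega_i\cap\omega_j = \emptyset$.
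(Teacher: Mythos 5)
Your treatment of the witness coalitions is correct and matches the paper's: the ``sandwich'' argument showing $S_i\subseteq A_1$ and $v_h(X_h)=u_h=(Tu)_h$ for all $h\in S_i$, injectivity pinning each bundle to the canonical $v_h^{-1}(u_h)$, and strict monotonicity plus maximality of $(Tu)_i$ ruling out waste, so that $\bigcup_{h\in S_i}v_h^{-1}(u_h)=\bigcup_{h\in S_i}\omega_h$ with disjointness \emph{inside} each $S_i$. Claim (1) then follows as you say. The setup of your argument for claim (2) is also right: if $o\in v_i^{-1}(u_i)\cap v_j^{-1}(u_j)$ with owner $\ell$, then internal disjointness forces $j\notin S_i$ and $i\notin S_j$, while $\ell\in S_i\cap S_j$.

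The gap is in how you close claim (2). None of the three steps you sketch is justified, and each fails on its own terms: (a) there is no reason that choosing minimal witnesses makes $S_i\cap S_j=\{\ell\}$, which is what ``reducing the intersection to $\omega_\ell$'' requires; (b) even granting $v_\ell^{-1}(u_\ell)\subseteq\omega_\ell$, the no-waste property is a statement about the \emph{union} of bundles over a coalition, not about any single bundle, so it does not force $v_\ell^{-1}(u_\ell)=\omega_\ell$ (other members of $S_i$ may consume part of $\omega_\ell$); and (c) even if $v_\ell^{-1}(u_\ell)=\omega_\ell$, nothing ``propagates'' to collapse $v_i^{-1}(u_i)$ to $\omega_i$. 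The paper's way out, which uses only facts you have already established, is to run the no-waste argument on the intersection coalition itself. Since endowments are disjoint, $v_h^{-1}(u_h)\subseteq\bigl(\bigcup_{h'\in S_i}\omega_{h'}\bigr)\cap\bigl(\bigcup_{h'\in S_j}\omega_{h'}\bigr)=\bigcup_{h'\in S_i\cap S_j}\omega_{h'}$ for every $h\in S_i\cap S_j$, so $\{v_h^{-1}(u_h):h\in S_i\cap S_j\}$ is an $(S_i\cap S_j)$-allocation; and if any object of $\bigcup_{h\in S_i\cap S_j}\omega_h$ were left unassigned, some $h\in S_i\cap S_j$ could append it and exceed $(Tu)_h=\max B^k_h(u)$ while still guaranteeing every partner $u_{h'}$, a contradiction. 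Hence $\bigcup_{h\in S_i\cap S_j}v_h^{-1}(u_h)=\bigcup_{h\in S_i\cap S_j}\omega_h\ni o$, so $o\in v_h^{-1}(u_h)$ for some $h\in S_i\cap S_j$; disjointness within $S_i$ then forces $h=i$, so $i\in S_j$, contradicting what you already proved. (The paper phrases this as a recursive refinement of $\{S_i\}$ into a partition of $A_1$ via $S\cap S'$, $S\setminus S'$, $S'\setminus S$, but the content is this intersection step.) Your first part is essentially the paper's; your second part needs this repair.
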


\begin{remark}
An algorithm exists that, with oracle access to $T$, finds a fixed point of $T^2$ in time $O(\log^{\abs{A}} M)$, where $M=\max\{\abs{\{X_i : v_i(X_i)\geq v_i(\w_i) \}}:i\in A \}$. This algorithm combines the standard idea of iterating the monotone increasing function $u\mapsto T^2 u$ on the lattice $U$ with binary search and is easily formulated given the recent literature on finding a Tarski fixed point: see, for example, \cite{etessami_et_al:LIPIcs:2020:11703}.\footnote{The recent literature on finding a Tarski fixed point includes \cite{dang2020computations} (in unpublished work that was perhaps the first in proposing the combination of iteration and binary search), \cite{chang2008complexity}, \cite{dang2018complexity}, and \cite{fearnleyfaster}.}
\end{remark}

\subsection{Application ($k=2$)}

A fixed point $u$ of $T^2$ always exists. Lemma~\ref{lem:Talgobasic} shows that if $u$ is also a fixed point of $T$, then there exists an allocation $X$ with $v_i(X_i)=u_i=(Tu)_i$, and $X$ is unblocked (by any coalition of size up to $k$) . However, a fixed point $u$ of $T^2$ may not be a fixed point of $T$. So, we consider a weaker solution concept than the core. We consider whether there exists an allocation $X$ such that $v_i(X_i)$ equals to either $u_i$ or $(Tu)_i$ for $i \in A$, and explore whether the set of such allocations satisfies a stability property.

We will consider the function $T: U \to U$ when $k=2$, and a fixed point $u=T^2u$ with $u \leq Tu$. For any allocation $X$ where $v_i(X_i) \geq u_i$ for all $i$, the gap $Tu - u \geq 0$ serves as an upper bound on instability. If there exists a pairwise block of $X$ by an $\{i,j\}$-allocation $(Y_i, Y_j)$, the utility gains are at most $(Tu)_i - u_i$ and $(Tu)_j - u_j$. 


Here is an overview of our approach. For agents in $A_1 = \{ i : u_i = (Tu)_i\}$, there exists an $A_1$-allocation defined by $u$ (Lemma~\ref{lem:subsetofagents}). For any other agent $i \in A_2=\{i' : u_{i'} < (Tu)_{i'}\}$, the utilities $u_i$ and $(Tu)_i$ are obtained either from the consumption of $i$'s endowment, or from a bundle obtained by a pairwise exchange with another agent in $A_2$. Thus, each agent $i \in A_2$ has a unique ``trading partner'' $j \in A_2$ such that $((Tu)_i, u_j)$ defines an $\{i,j\}$-allocation (Lemma~\ref{lem:cycle_construction}). Subsequently, the agent $j$ has a unique trading partner $k \in A_2$ such that $((Tu)_j, u_k)$ defines an $\{j,k\}$-allocation. It follows that these agents are in a cycle $i \to j \to \dots \to i$; and such cycles partition $A_2$.

For an illustration, suppose that an economy has 4 agents $A=\{i_1,i_2,i_3,i_4\}$. Consider a fixed point of $u=T^2u$ with $u \leq Tu$, and in which all agents are in the same cycle $i_1 \to i_2 \to i_3 \to i_4 \to i_1$, as defined above. Then, the allocation $Y$ defined by $((Tu)_{i_1}, u_{i_2}, (Tu)_{i_3}, u_{i_4})$, and the allocation $Z$ defined by $((Tu)_{i_2}, u_{i_3}, (Tu)_{i_4}, u_{i_1})$ are such that every agent $i$ obtains either $u_i$ or $(Tu)_i$. (Note that we use injectivity of utility to talk about allocations in goods or utility space interchangeably.)

\begin{figure}[h!]
\centering
\begin{tikzpicture}[->,>=stealth,shorten >=1pt,auto,node distance=2.2cm,
                    thick,main node/.style={circle,draw,font=\sffamily\bfseries, inner xsep=-1pt, inner ysep=-1pt}]

  \node[main node,draw=none] (1) {$(Tu)_{i_1}$};
  \node[main node,draw=none] (2) [right of=1] {$(Tu)_{i_2}$};
  \node[main node,draw=none] (3) [right of=2] {$(Tu)_{i_3}$};
  \node[main node,draw=none] (4) [right of=3] {$(Tu)_{i_4}$};
  \node[main node,draw=none] (5) [below of=1, node distance=2.5cm] {$u_{i_1}$};
  \node[main node,draw=none] (6) [below of=2,node distance=2.5cm] {$u_{i_2}$};
  \node[main node,draw=none] (7) [below of=3,node distance=2.5cm] {$u_{i_3}$};
  \node[main node,draw=none] (8) [below of=4,node distance=2.5cm] {$u_{i_4}$};

    \node[ellipse, draw, inner xsep=2pt, inner ysep=-10pt, fit=(1) (5), label={$i_1$}] {};
    \node[ellipse, draw, inner xsep=2pt, inner ysep=-10pt, fit=(2) (6), label={$i_2$}] {};
    \node[ellipse, draw, inner xsep=2pt, inner ysep=-10pt, fit=(3) (7), label={$i_3$}] {};
    \node[ellipse, draw, inner xsep=2pt, inner ysep=-10pt, fit=(4) (8), label={$i_4$}] {};

  \path[every node/.style={font=\sffamily\small}]
    (4) edge [-,dashed, line width=2pt] node[left] {} (5)
    (1) edge [-,line width=2pt] node[left] {} (6)
    (2) edge [-,dashed, line width=2pt] node[right] {} (7)
    (3) edge [-,line width=2pt] node[right] {} (8);
\end{tikzpicture}
\caption{Each line represents a pairwise exchange yielding utilities $u$ or $Tu$. Two solid-line exchanges result in allocation $Y$, while two dashed-line exchanges result in allocation $Z$.}
\end{figure}

If an $\{i,j\}$-allocation $(X_i, X_j)$ blocks $Y$ (so $v_j(X_j) > v_j(Y_j) \geq u_j$), then the utilities from blocking must be bounded above by $Tu$ ($v_{i}(X_{i}) < (Tu)_{i}$ and $v_{j}(X_{j}) < (Tu)_{j}$). For otherwise, $v_i(X_i) > (Tu)_i$ would contradict the definition of $Tu$, and the equality $v_i(X_i) = (Tu)_i$ holds only if $v_j(X_j) = u_j$ (as explained above using Lemma~\ref{lem:subsetofagents}), but it would contradict $v_j(X_j) > u_j$. Thus, if there is blocking of $Y$, $i_2$ and $i_4$ must be the blocking agents. The blocking of $Y$ leaves agents $i_1$ and $i_3$ to consume their endowments, and results in the allocation $(\omega_{i_1}, X_{i_2}, \omega_{i_3}, X_{i_4})$.

The blocking of $Y$ is followed by a sequence of forward-looking blocks, resulting in the allocation $Z$. For instance, the blocking by $\{i_2, i_4\}$ is succeeded by another blocking done by $\{i_2, i_3\}$ and defined by $((Tu)_{i_2}, u_{i_3})$. The resulting allocation is $(\omega_{i_1}, (Tu)_{i_2}, u_{i_3}, \omega_{i_4})$. Subsequent blocking by $((Tu)_{i_4}, u_{i_1})$ yields another allocation: $(u_{i_1}, (Tu)_{i_2}, u_{i_3}, (Tu)_{i_4})$, which we have called $Z$.

Therefore, the allocation set $\{Y,Z\}$ satisfies the following stability condition: any block of an allocation in $\{Y,Z\}$ is followed by a sequence of forward-looking blocks, resulting in another allocation in the same set $\{Y,Z\}$. A set of allocations with this property is termed \emph{locally absorbing}. While the entire set of allocations is locally absorbing, we identify a non-trivial locally absorbing set. A singleton set of a core allocation is locally absorbing, as is a von-Neumann-Morgenstern's external stable set where any allocation outside the set has a blocking that results in an allocation within the set.

A locally absorbing set may, in general, include more than two allocations. For any even-length cycle $C=\{i \to j \to \dots, i\}$, there are two $C$-allocations, where each $i \in C$ receives either $u_i$ or $(Tu)_i$. If agents in $A_2$ are divided into two even-length cycles, the locally absorbing set we construct would have four allocations. The locally absorbing sets that we find are minimal, in the sense that a proper subset of allocations is not a locally absorbing set.

Our arguments so far require all cycles to have even numbers of agents.  We later show that this condition of even length cycles holds for any 2-copy replica economy. We can extend our result to any replica economy with an even number of copies, and we could formulate an approximation result to deal with odd copies, but we do not work out the details of such a result here.

\subsubsection{A locally absorbing set}\label{sec:BSdefinition}

As discussed above, the idea behind a locally absorbing set is that any blocking of an allocation in this set is followed by a sequence of forward-looking blocks, resulting in another allocation in the set. We now formalize this idea.

A coalition $S\subset N$ \df{blocks} an allocation $X$ if there exists an
$S$-allocation $(Y_i)_{i \in S}$ such that $v_i(Y_i) > v_i (X_i)$ for all $i\in S$. The blocking by $(Y_i)_{i \in S}$ defines a new allocation $Y$ in
which: a member of the blocking coalition $S$ gets $Y_i$, the agents who are not in $S$ but were trading with an agent in $S$ under $X$ consume their endowments, and the remaining agents receive the same as in $X$. In other words, $Y_h=\w_h$ for those agents $h\notin S$ who were receiving $X_h$ through a $T$-allocation with $S \cap T\neq \emptyset$; and $Y_h=X_h$ for all agents $h\notin S$ for whom $X_h$ is obtained through a $T$-allocation with $S \cap T=\emptyset$.

An allocation $Y$ is \df{dominated} by an allocation $Z\neq Y$ if there is a sequence of allocations $Z^1, Z^2 \dots, Z^n$ such that:
\begin{enumerate}
\item $Z^1=Y$, $Z^n=Z$, and
\item ($\forall m=1, 2, \dots, n-1$) 
    \begin{itemize}
    \item A coalition $S_m$ blocks $Z^m$, which results in $Z^{m+1}$,
    \item $Z^{m+1}_{i_m} = Z^{m+2}_{i_m} = \cdots = Z^n_{i_m}$ for all $i_m \in S_m$.
    \end{itemize}
\end{enumerate}

A collection of allocations $\mathcal{S}$ is \emph{locally absorbing} if each $X \in \mathcal{S}$ is either unblocked or any blocking results in an allocation dominated by one in $\mathcal{S}$. We can similarly define \emph{a pairwise locally absorbing set} when only pairwise exchanges are allowed.

A locally absorbing set always exists. When the core is non-empty, any subset of the core is locally absorbing. If the core is empty, the set of all allocations, or the individually rational allocations, is locally absorbing. The reason is that blocking any allocation in the set results in another allocation that is dominated by yet another allocation in the set. We are interested in minimal locally absorbing sets, and the non-emptyness of the core yields the smallest possible such sets.

The question is whether a non-trivial (small) locally absorbing set exists, even when the core is empty. One possibility is to construct a locally absorbing set through an iterated elimination of allocations that do not dominate other allocations. Initially, we restrict the set to individually rational allocations and then eliminate any allocations that do not dominate any other allocation. Then we further eliminate allocations that only dominate the allocations eliminated in the previous round.\footnote{For the left-right shoe economy in Example~\ref{ex:shoes}, this construction obtains a minimal locally absorbing set $\{X,Y,Z\}$.} While repeating this procedure results in a smaller absorbing set, it can still be quite large. Alternatively, we can begin with a singleton set of a core allocation or a vNM stable set, both of which would be locally absorbing, but the core might be empty, and a stable set may not exist. Therefore, we aim to explore a different construction using a fixed point $u$ of $T^2$ with the condition $u \leq Tu$.

\subsubsection{The construction of a pairwise locally absorbing set}\label{sec:BSresults}

We restrict attention to pairwise exchanges, and therefore consider the function $T=T_k: U \to U$ with $k=2$. Then we take a fixed point $u$ of $T^2$ with $u \leq T u$, and  partition the set of agents such that $A_1= \{ i : u_i = (Tu)_i\}$ and $A_2= \{i : u_i < (Tu)_i\}$. The preallocation defined by $u$ (which is $\{ v_i^{-1}(u_i) : i \in A\}$) is an $A_1$-allocation (Lemma~\ref{lem:subsetofagents}). It remains to find an allocation for agents in $A_2$.

\begin{lemma}\label{lem:cycle_construction}
For any $i \in A_2 = \{i : u_i < (Tu)_i\}$, there exists a unique agent $j \in A_2 \backslash \{i\}$ such that $((Tu)_i, u_j)$ is an $\{i,j\}$-allocation, i.e., $v^{-1}_i((Tu)_i)$ and $v^{-1}_j(u_j)$ partition $\w_i \cup \w_j$.
\end{lemma}

We consider a directed graph such that each agent is a node, and an agent $i\in A_2$ has exactly one outgoing directed edge $i\to j$, where $j\in A_2$ is the agent identified in Lemma~\ref{lem:cycle_construction}. Thus when $i\in A_2$ and $i\to j$ then  $((Tu)_i, u_j)$ is an $\{i,j\}$-allocation. When $i\in A_1$, the graph features a self-loop, $i\to i$.

Let $\mathcal{P}$ be a partition of the set $A_2$, defined by cycles in the graph. That means, each partition set $C=\{i_1, i_2, \dots\} \in \mathcal{P}$ comprises agents in a cycle $i_1 \to i_2 \to \dots \to i_1$, which subsequently means that, for example, $((Tu)_{i_1}, u_{i_2})$ is an $\{i_1, i_2\}$-allocation.

If a cycle $C$ has an even number of agents, we may define two $C$-allocations depending on whether the odd-numbered, or the even-numbered, agents get the allocation that corresponds to $Tu$. Let $Y^C$ be a $C$-allocation with $v_{i_1}(Y^C_{i_1}) = (Tu)_{i_1}$, and similarly for all odd-numbered agents, and $v_{i_2}(Y^C_{i_2}) = u_{i_2}$ and similarly for all even-numbered agents. On the other hand, let $Z^C$ be a $C$-allocation with $v_{i_2}(Z^C_{i_2}) = (Tu)_{i_2}$ and similarly for all even-numbered agents, and $v_{i_1}(Z^C_{i_1}) = u_{i_1}$ and similarly for all odd-numbered agents. Note that the definitions of these allocations require that $C$ have an even number of agents.

If every cycle $C \in \mathcal{P}$ has an even number of agents, then we define a set of allocations, 
\begin{equation}
    F_u=\{ X : X_i = v_i^{-1}(u_i), \forall 
 i\in A_1, \text{ and } X^C \in \{Y^C, Z^C\}, \forall C\in\mathcal{P} \}.
    \label{eq:stable_set}
\end{equation}

\begin{proposition}\label{prop:farsighted_stability}
Let $u\in U$ be a fixed point of $T^2$ such that $u\leq Tu$. Consider the partition $\mathcal{P}$ of the set of agents $A_2 = \{ i : u_i < (Tu)_i\}$. If every cycle $C \in \mathcal{P}$ has an even number of agents, then $F_u$ is a pairwise locally absorbing set.
\end{proposition}

The process of constructing cycles to partition the set of agents coincides with the approach in \cite{tan1991necessary} to the roommate problem. From any fixed point $u$ of $T^2$ alongside its image $Tu$ (with $u \leq Tu$), we identify for each agent $i$ at most two trade partners. We identify precisely one exchange per partner, yielding utilities $u_i$ or $(Tu)_i$. Thus by limiting exchanges to those yielding utilities $u$ or $Tu$, we essentially transform the original exchange economy into a roommate problem.

\cite[Proposition 3.2]{tan1991necessary} shows that, if all cycles with at least two agents have even lengths, a roommate matching defined by the cycles is stable. The situation is somewhat different in our model. There can be a blocking coalition $\{i, j\}$ with a $\{i,j\}$-allocation $(X_i, X_j)$ such that $u_i < v_i(X_i) < (Tu)_i$ and $u_j < v_j(X_j) < (Tu)_j$. The exchange economy model in our paper is more general than a traditional roommate problem, as two agents can have many potential exchanges, resulting in utilities different from $u$ or $Tu$. The Pareto frontier of achievable utilities by a pair of agents may be non-trivial. This makes our problem quite different from Tan's, and the use of the $T$ algorithm is crucial in our arguments.

Fortunately, when agents $i,j$ block an allocation defined by cycles in $\mathcal{P}$, their utilities from blocking are strictly bounded above by $(Tu)_i$ and $(Tu)_j$. This property lets us identify a sequence of forward-looking blocks that lead to a different allocation defined by the cycles. Even if a part of Tan's result -- no blocking -- does not hold, there still exists a sense of stability for the set of allocations defined by the cycles.

\begin{proof} (Proof of Proposition~\ref{prop:farsighted_stability})
If all allocations in $F_u$ are unblocked, the proof is complete. Suppose agents $i$ and $j$ block an allocation $X \in F_u$ with a $\{i,j\}$-allocation $(X'_i, X'_j)$. Let $\overline{X}$ be the resulting allocation after the block.

We observe that $i \neq j$, because $v_i(X'_i) > v_i(X_i) \geq u_i \geq v_i(\omega_i)$. The allocation $X$ is individually rational, so it cannot be blocked by an agent alone. 

First, note that $i, j \in A_2$. Agent $i$ can obtain at most $(Tu)_i$ if other agents' utilities must be at least $u$. So, if $i \in A_1$, then $v_i(X'_i) > v_i(X_i) \geq u_i = (Tu)_i$ while $v_j(X'_j) > u_j$, which would contradict the definition of $Tu$. 

Second, the utilities from pairwise blocks are strictly bounded above by $Tu$: $u_i < v_i(X'_i) < (Tu)_i$ and $u_j < v_j(X'_j) < (Tu)_j$. We already noted that $v_i(X_i) < v_i(X'_i) \leq (Tu)_i$. Since every agent obtains a level of utility either $u$ or $Tu$ in the allocation $X \in F_u$, we have $v_i(X_i) = u_i$. A similar argument shows that $v_j(X_j) = u_j$. Moreover, as the utility function $v_i$ is injective, agent $i$ obtains $(Tu)_i$ only from a pairwise exchange with a particular trading partner, say $i'$, where $i'$ obtains $u_{i'}$. The exchange $\{X'_i, X'_j\}$ is not such an exchange because $v_j(X'_j) > v_j(X_j) = u_j$. Thus, $v_i(X'_i) < (Tu)_i$. A symmetric argument shows that $v_j(X'_j) < (Tu)_j$. 

Next, we identify a sequence of pairwise blocks to $\overline{X}$ that ultimately leads to an allocation in $F_u$.

Suppose that the blocking agents $i, j \in A_2$ belong to two different cycles $C_i, C_j \in \mathcal{P}$, respectively. We label agents in $C_i$ such that agent $i$ is numbered as $i_1$ in the cycle $C_i = \{ i_1 \to i_2 \to \dots \to i_n \to i_1\}$. Then, the odd-numbered agents in $C_i$ obtain the utility $u$ in the initial allocation $X$, and the even-numbered agents in $C_i$ obtain $Tu$. Moreover, in the allocation $\overline{X}$ resulting from the blocking of $X$ by $\{i,j\}$, we have
$$\overline{X}_{i_1} = X'_{i_1}, \quad \overline{X}_{i_n} = \omega_{i_n}, \quad \text{and} \quad \overline{X}_{i'} = X_{i'} \text{ for any other } i' \in C_i.$$
It remains true in $\overline{X}$ that odd-numbered agents in $C_i$ obtain utilities strictly less than $Tu$, and even-numbered agents in $C_i$ obtain utilities equal to $u$, with one exception of agent $i_n$ who obtains $v_{i_n}(\omega_{i_n}) < u_{i_n}$.\footnote{Since $i_n$ is in a cycle, $i_n \in A_2$, which means the agent obtains the utility $u_{i_n}$ in an exchange with a partner, say $i'$, who obtains $(Tu)_{i'}$.} We can then identify a sequence of blocks: $\{i_{n-1}, i_n\}$ blocks $\overline{X}$ so that they obtain $((Tu)_{i_{n-1}}, u_{i_n})$. If $n = 4, 6, \ldots$ (the even length cycle $C_i$ is longer than 2), then the agent $i_{n-2}$ is now left to consume her endowment, so $\{i_{n-3}, i_{n-2}\}$ blocks and obtain $((Tu)_{i_{n-3}}, u_{i_{n-2}})$. The sequence of blocks continues until $\{i_1, i_2\}$ blocks and increase their utilities from $(v_{i_1}(X'_{i_1}), v_{i_2}(\omega_{i_2}))$ to $((Tu)_{i_1}, u_{i_2})$. In the end, the sequence of blocks results in a $C_i$-allocation, either $Y^{C_i}$ or $Z^{C_i}$ as defined above in Proposition~\ref{prop:farsighted_stability}. We omit a similar sequence of objections by agents in the cycle $C_j$, ensuring the final allocation is in $F_u$.

Suppose, on the other hand, that the blocking agents $i,j \in A_2$ are in the same cycle $C \in \mathcal{P}$. We number the agents in $C=\{i_1 \to i_2 \to \dots i_n \to i_1\}$ such that $i = i_1$. Then, an agent $i' \in C$ is odd numbered if and only if $v_{i'}(X_{i'}) = u_{i'}$. We observed above that $v_j(X_j) = u_j$, so $j$ must be an odd-numbered agent, say $j=i_k$ with an odd number $k$. Then, we have two \emph{chains} (instead of cycles) $C_i=\{i \to i_2 \to \dots \to i_{k-1}\}$ and $C_j = \{ j \to i_{k+1} \to \dots \to i_n\}$. We take the chain $C_i$ and observe that, in the allocation $\overline{X}$,  
$$\overline{X}_{i} = X'_{i}, \quad \overline{X}_{i_{k-1}} = \omega_{i_{k-1}}, \quad \text{and} \quad \overline{X}_{i'} = X_{i'} \text{ for any other } i' \in C_i.$$
In the allocation $\overline{X}$, the odd-numbered agents in $C_i$ obtain utilities strictly less than $Tu$, and the even-number agents obtain utilities equal to $Tu$, with the exception of agent $i_{k-1}$ who obtains $v_{i_{k-1}}(\omega_{i_{k-1}}) < u_{i_{k-1}}$. The same argument as above when $C_i$ was a cycle continues to apply when $C_i$ is a chain. It finds a sequence of pairwise blocks by $\{i_{k-2}, i_{k-1}\}, \dots, \{i_1, i_2\}$ after which each odd numbered agent $i'$ obtains $(Tu)_{i'}$ and each even numbered agent $i''$ obtains $u_{i''}$. Similarly, we can find a sequence of pairwise blocks by agents in $C_j$, and the final allocation is in $F_u$.
\end{proof}

\subsubsection{A replica economy}\label{sec:replica}

The assumption in Proposition~\ref{prop:farsighted_stability} regarding all cycles having an even number of agents may not hold in general such as the shoe economy (Example~\ref{ex:shoes}). 
However, since each agent $i$ has a unique $j$ such that an $\{i, j\}$-allocation obtains $((Tu)_i, u_j)$, cycles in a 2-copy replica economy -- two copies of each agent ``type'' -- can have only even lengths.\footnote{\cite{chiappori2014roommate} also considers a 2-copy replica economy to establish a stable outcome for the roommate problem with transferable utilities.}

Formally, given a baseline economy $E = (O, \{ (v_i, \omega_i) : i \in A\})$, a 2-copy replica economy $\overline{E} = (\overline{O}, \{(v_i, \omega_i) : i \in \overline{A}\})$ is defined by
\begin{itemize}
\item $\overline{A} = A' \cup A''$, where $A'=A''=A$. We call $i' \in A'$ and $i'' \in A''$ the first and the second copies of $i \in A$.

\item For $o \in \omega_{i}$, we have $o' \in \omega_{i'}$ and $o'' \in \omega_{i''}$. Moreover, $\overline{O} = O' \cup O''$, where $O' = \cup_{i' \in A'} \omega_{i'}$ and $O'' = \cup_{i'' \in A''} \omega_{i''}$. 

\item $v_{i}=v_{i'}=v_{i''}$. Two copies of the same object are utility-indistinguishable, and consuming a duplicate copy adds no utility.
\end{itemize}

Given a baseline economy $E$, let $u$ be a fixed point of $T^2$ with $u \leq Tu$. Since each agent $i$ in the baseline economy has a unique trading partner $j$ such that an $\{i, j\}$-allocation obtains $((Tu)_i, u_j)$, the replica economy has a (feasible) allocation such that, for every agent (type) $i$, one copy obtains $(Tu)_i$ and the other copy obtains $u_i$. 


A straightforward extension of a fixed point $u$ from the baseline economy is a fixed point $\overline{u}$ of the replica economy: $\overline{u}_{i'}=\overline{u}_{i''} = u_i$ for copies $i', i''$ of $i$. The key observation is that
\[\begin{split}
(T\overline{u})_{i'} =  \max B_{i'}^{k=2}(\overline{u}) 
 = & \max \{ \hat{u}_{i'} : \text{$\exists j'' \in A''$ with an $\{i', j''\}$-allocation $(X_{i'}, X_{j''})$}\\
&\text{such that $v_{i'}(X_{i'})=\hat{u}_{i'}$ and $v_{j''}(X_{j''})\geq \overline{u}_{j''}$}\}\\
 = & (Tu)_i.
\end{split}\]

The equalities hold even though agent $i'$ is restricted to trade only with agents in $A''$. Since each copy of the same agent type has the same endowment, it is without loss to assume that the first copy of an agent type trades only with the second copy of any agent type. 

Moreover, each cycle $C \in \mathcal{P}$ in the baseline economy has a straightforward extension to an even-length cycle $\overline{C}$ in the replica economy. For example, a cycle $i \to j \to k \to i$ is extended to $i' \to j'' \to k' \to i'' \to j' \to k'' \to i'$. The first copies and the second copies alternate because exchanges are restricted, without loss, to be between the first copy of a type and the second copy of another type.

\section{Conclusion}

Our paper explores stable allocations in discrete exchange economies, employing two approaches inspired by \cite{shapley1974cores}'s work on obtaining the weak core in the housing market. Firstly, we utilize Scarf's balancedness condition for a NTU game, investigating sufficient conditions on preferences for the exchange economy to define a balanced NTU and, therefore, have a nonempty weak core. Notably, this necessitates restrictive assumptions, specifically a categorical economy with additively separable and dichotomous preferences. Secondly, we extend Gale's Top Trading Cycle algorithm as an iteration of a Tarski function ($T^2$), applicable to any exchange economy under mild preference assumptions (strict and monotone preferences). We observe that a fixed point of $T$ (rather than $T^2$) is related to a stable allocation, but a fixed point may not exist. Accordingly, we work with a fixed point of $T^2$ (which always exists) and explore how it relates to a different notion of stability.

\appendix

\section{Proof of Theorem~\ref{thm:categorical_economy}}
\label{sec:proofs}


The weak random core is nonempty as proven in Proposition~\ref{prop:fractional_core}. We take a random core allocation $\sigma$ and define target utilities $(t_i)_{i \in A}$ as the floor of $v_i(\sigma)$ for $i \in A$. As explained after Theorem~\ref{thm:categorical_economy}, our final step is to round the fractional matrix $P$ defined by the random allocation $\sigma$ to obtain a deterministic allocation $X$ such that $v_i(X_i) \geq t_i$ for $i \in A$.



We consider a rounding algorithm that operates over any matrix $B$ with $\abs{A}$ rows and $\sum_k \abs{O^k}$ columns, augmented by a column vector $\tilde{t}$. Initially, $(B | \tilde{t})= (P | t)$ and satisfies:\begin{enumerate}
    \item All entries of $B$ are in $[0,1]$,
    \item\label{it:matrixtwo} Each column sums up to 1.
    \item\label{it:matrixthree}
    $\sum_{k} \sum_{j \in O^k} B(i,j)  \geq \tilde{t}_i$, and 
    \item\label{it:matrixfour} If $b_{i,j}>0$, then $v_i^k(j)=1$ for the category $k$ of object that $j$ is.
\end{enumerate}

The algorithm has two major subroutines: preprocessing and rounding.

\begin{enumerate}
    \item{\textbf{Preprocessing}}
    
    We apply the following procedures to $(B|\tilde{t})$ in no particular order until there is no further update. The number of rows may increase, so we refer to $i$ as a row rather than an agent.

\begin{enumerate}

    \item \label{pre_processing_one} Remove any row $i$ with $\tilde{t}_i=0$ (row $i$ gets no object), and any column $j$ with $\sum_{i} B(i,j) = 0$ (no row gets object $j$).
    
    \item \label{pre_processing_two} For column $j$, if there exists a unique $i$ such that $B(i,j) > 0$, then increase $B(i,j)$ to $1$, and reduce $B(i,j')=0$ for any $j'$ in the same category as $j$.
    \item \label{pre_processing_three} Replace any entry $B(i,j)=1$ with $0$ and reduce $\tilde{t}_i$ by 1 ($j$ is assigned to the agent associated with row $i$).      
    	\item \label{pre_processing_four} For row $i$ and category $k$ such that $\sum_{j\in O^k} B(i,j) =1$ (a guaranteed utility of 1 from category $k$) and $\tilde{t}_i > 1$ (additional guaranteed utility), divide row $i$ into two rows. One row inherits category $k$ ($\{B(i,j)\}_{j \in O^k}$), the other row inherits other categories ($\{B(i,j)\}_{j \notin O^k}$), and the remaining entries of the two rows are zero. Accordingly, expand the target-utility vector $\tilde{t}$ such that $\tilde{t}_i$ is divided into two elements $1$ and $\tilde{t}_i -1$.
	
	\end{enumerate}

    \item{\textbf{Rounding}}
    
    $(B|\tilde{t})$ satisfies:
        \begin{enumerate}
        \item \label{it:matrixone_revised}All entries of $B$ are in $[0,1)$,
        \item\label{it:matrixtwo_revised} 
        For each $j$, 
        $0< \sum_i B(i,j) \leq 1$ and there exist at least two rows $i, i'$ with $B(i, j) > 0$ and $B(i', j) > 0$.
        \item\label{it:matrixthree_revised} 
        $\sum_k \sum_{j\in O^k} B(i,j) \geq \tilde{t}_i \in\{1, 2, \dots, K-1\}$.
        \footnote{$\tilde{t}_i \neq 0$ because we applied preprocessing \ref{pre_processing_one}; $\tilde{t}_i \neq k$ because we applied preprocessing \ref{pre_processing_four}.}
        \end{enumerate}

    Here we consider a graph $(V, E)$ such that $V$ is the set of non-zero elements of $B$ and $E$ is the set of pairs of vertices that are in the same row or column. 

    We claim that there exists a cycle. Take any $(i,j)$ with $B(i,j) > 0$. By property~ \ref{it:matrixtwo_revised}, we find $i' \neq i$ such that $B(i', j) > 0$. Since the guaranteed utility $\tilde{t}_{i'}\geq 1$ (property~ \ref{it:matrixthree_revised}), there exists $j' \neq j$ such that $B(i', j') > 0$. Then, by  property~\ref{it:matrixtwo_revised}, we subsequently find $i'' \neq i'$ such that $B(i'', j') > 0$. When some agent appears twice in this construction of a path, a cycle $(i_1, j_1), \dots, (i_M, j_M)=(i_1, j_1)$ is found. Label the entries in the cycle such that 
    each odd-numbered entry is in the same column as the next entry and in the same row as the preceding one.

    We add $\epsilon > 0$ for each odd-numbered entry, and subtract $\epsilon$ from the next entry. The row and column sums remain the same. Choose the largest $\epsilon$ such that either (i) one entry in the cycle becomes integral, i.e., $0$ or $1$, or (ii) there exists a row $i$ and category $k$ such that the constraint $\sum_{j\in O^k} B(i,j) v_i^k(j) \leq 1$ has become newly binding.
\end{enumerate}

We iterate the preprocessing and rounding subroutines. Preprocessing weakly decreases the number of fractional elements of $B$, even if the size of the matrix may increase. Rounding either (i) decreases the number of fractional elements by at least one, or (ii) creates a new pair, consisting of a row $i$ and a category $k$, such that $\sum_{j\in O^k} B(i,j) = 1$; and such pair forms a new row of $(B| \tilde{t})$ in the follow-up preprocessing step~\ref{pre_processing_four}. Hence, the algorithm terminates in at most $\abs{P} + (|A| \times (K-1))$ rounds. When it terminates, the matrix $B$ is integral.

Finally, we identify an allocation from the integral matrix $B$ and the assignments by preprocessing~\ref{pre_processing_three} given to agents associated with the rows. The target utility is achieved because of property~\ref{it:matrixthree_revised} of the matrix $B$ and the sum of elements of $\tilde{t}$ associated with each agent has decreased by one only upon an assignment of an acceptable good by preprocessing~\ref{pre_processing_three}. Each agent consumes at most one acceptable good of each category $k$ because, when the algorithm runs, at most one row $i$ associated with the agent has $\sum_{j \in O^k} B(i,j) > 0$, until a possible assignment of an object $j$ in category $k$ by preprocessing~\ref{pre_processing_three}, after which every row $i$ associated with the agent has $\sum_{j \in O^k} B(i,j) = 0$.

We have established an allocation in which all agents attain their target utilities. This completes the proof, as discussed following Theorem~\ref{thm:categorical_economy}. 

\section{Gains from trade}\label{sec:gainsfromtrade}

Apart from categorical economies with dichotomous preferences, other conditions can also ensure a balanced NTU game.

An economy satisfies \df{gains from trade} if trades within a larger coalition can achieve at least the utilities that smaller constituent coalitions can achieve. Formally, for any 
two coalitions $S$ and $S'$, any  $S$-allocation $\{X_i : i\in S\}$ and any $S'$-allocation $\{X'_i : i\in S'\}$, there exists an $S\cup S'$-allocation $\{Y_i : i \in S\cup S'\}$ with $v_h(Y_h)\geq \min\{v_h(X_h),v_h(X'_h) \}$ for all $h\in S\cup S'$; where we define $X'_h=X_h $ when $h\in S'\setminus S$ or $h\in S' \setminus S$.

The assumption of gains from trade is violated in our Example~\ref{ex:shoes}. Consider the coalitions $S=\{1,2\}$ and $S'=\{2,3 \}$, with the $S$-allocation $(X_1,X_2)=((\ell_1,r_2),(\ell_2,r_1))$, and the $S'$-allocation $(X'_2,X'_3)=((\ell_2,r_3),(\ell_3,r_2))$. It is clear that any allocation in $S\cup S'$ that gives $1$ at least the utility from $X_1$ must give $1$ the bundle $X_1$. This rules out that 3 gets $r_2$, so 3 would have to receive $(\ell_1,r_3)$ in order to be as well off as in $X'_3$. This is, however incompatible with $1$ getting $\ell_1$.

While the ``gains from trade'' condition is restrictive, it can hold when agents' endowments are specialized for each kind of product (production specialization or having exclusive access to suppliers of each kind). A larger coalition ensures that each member obtains a more diversified consumption, which can be desired.

\begin{theorem}\label{thm:gainstrade}
If an economy has gains from trade, and all agents' utility functions are injective, then the weak core is nonempty.
\end{theorem}

We present two proofs. The first proof proceeds by showing that Scarf's sufficient condition is satisfied. The second proof shows that an economy as given in the theorem is associated to an ordinally convex NTU game. These are known to have nonempty weak core (\cite[Theorem 12.3.3]{peleg2007introduction}).

\begin{proof}
We present two proofs. The first proof shows that, if an economy has gains from trade, the induced NTU game $(A,V)$ is balanced and therefore, has a nonempty weak core. 

Suppose that $\S$ is a balanced collection of coalitions and $u \in \cap_{S \in \S} V(S)$. Take any $S,S' \in \S$. We know there exists $S$-allocation $X$ and $S'$-allocation $X'$ such that $u_i \leq v_i(X_i)$ for $i\in S$ and $u_i \leq v_i(X'_i)$ for $i \in S'$. By gains from trade, there exists $S \cup S'$-allocation $Y$ such that $v_h(Y_h) \geq \min\{v_h(X_h),v_h(X'_h)\} \geq u_i$  for $h \in S \cup S'$. We can now apply gains from trade to coalitions $T=S\cup S'$ and $T' \in \S$ to get a $T \cup T'$- allocation in which every agent $i \in T \cup T'$ gets at least $u_i$. Eventually, we'll get an allocation in which each agent $i \in A$ gets at least $u_i$ and so we get that $u \in V(A)$, and so the game $(A, V)$ is balanced.

Next, we present the second proof, based on the theory of ordinally convex games: The defined game is ordinally convex if for all $S, S' \subseteq A$, 
\begin{equation}
V(S) \cap V(S') \subseteq V(S\cap S') \cup V(S \cup S'). 
\label{convex_game}
\end{equation}
An ordinaly convex game has nonempty weak core (\cite[Theorem 12.3.3]{peleg2007introduction}). 

We proceed to show that the game $(A, V)$ is ordinally convex. Consider $S, S' \subseteq A$. If $S \subseteq S'$, then \eqref{convex_game} holds trivially because $V(S) = V (S \cap S')$ and $V(S') = V(S\cup S')$. A similar conclusion holds if $S' \subseteq S$. Thus, we suppose that $S\cap S'$ is a strict super set of $S$ and $S'$. 

Take $u \in V(S) \cap V(S')$. There are $S$-allocation $\{Y_i : i \in S\}$ and  $S'$- allocation $\{Y'_i : i\in S'\}$ such that $u_i \leq v_i(Y_i)$ for $i\in S$ and $u_j \leq v_j(Y'_j)$ for $j \in S'$. If $S$ and $S'$ are disjoint, then $\{Y_i : i\in S\} \cup \{Y'_j : j \in S'\}$ is an $S \cup S'$-allocation. Then, $u \in V(S \cup S')$. Consider the other case of $S \cap S' \neq \emptyset$. If $Y_i = Y'_i$ for $i \in S \cap S'$, then by no indifferences, no object in $\{Y_i : i \in S \cap S'\}$ ($= \{ Y'_i : i \in S \cap S'\}$) is from the endowments of $S \backslash S'$ or $S'\backslash S$. Thus, $\{Y_i : i \in S\cap S'\}$ is an $S\cap S'$-allocation. It follows that $u \in V(S \cap S')$. On the other hand, if for some $j \in S \cap S'$, $v_j(Y_j) < v_j(Y'_j)$ (we omit the other case of $v_j(Y_j) > v_j(Y'_j)$), then gains from trade implies that there exists an $S \cup S'$- allocation $\{ Z_i: i \in S \cup S'\}$ with $v_i(Z_i) \geq \min\{v_i(Y_i), v_i(Y'_i)\}$ for all $i \in S \cup S'$. Hence, $u_i \leq  v_i(Z_i)$ for all $i \in S \cup S'$, which implies $u \in V(S\cup S')$.
\end{proof}

The second proof is not applicable to a categorical economy with additively separable and dichotomous preferences because it does not define an ordinally convex game. For example, consider an economy with three agents $\{1, 2, 3\}$ such that agents $1$ and $2$ consider each other's endowment  acceptable, and agents $2$ and $3$ consider each other's endowment acceptable. For any coalition $\{i, j\}$ with $(i=1, j=2)$ or $(i=2, j=3)$, $(1, 1, 1) \in V(\{i,j\})$ because there exists a $\{i,j\}$-allocation $\{X_i=\w_j, X_j = \w_i\}$ such that $1 \leq u_i(X_i)$ and $1 \leq u_j(X_j)$. However, $(1, 1, 1) \neq V(\{1, 2, 3\})$ because agents 1 and 3 consider only agent 2's endowment acceptable. The convexity condition \eqref{convex_game} does not hold.

\section{Proofs regarding the $T$ algorithm}\label{sec:pfsalgo}

\subsection{Proof of Lemma~\ref{lem:TTC_extension}}

We prove the lemma by induction.
\begin{itemize}
    \item ($r=1$) The utility $T\underline{u}$ identifies an $A_1$-allocation. Formally, $\{v^{-1}_i((T\underline{u})_i) : i \in A_1\}$ is an $A_1$-allocation such that each $i \in A_1$ gets her most preferred house. Consequently, for $i\in A_1$, $(T\underline{u})_i = (T^{2}\underline{u})_i$, and   \eqref{eqn:semi-monotonic-T} implies that $(T\underline{u})_i= (T^{2}\underline{u})_i= (T^{3}\underline{u})_i=\dots$.
    
    \item ($r=2$) Given $T\underline{u}$, no agent in $A_1$ is willing to trade with an agent not in $A_1$. Hence, for any agent $i \notin A_1$, $(T^2\underline{u})_i$ is a house that is not allocated in the first round of TTC. Thus,  $T^3\underline{u}$ identifies an $A_2$-allocation such that an agent in $A_2$ gets her most preferred remaining house after the first round of TTC. Consequently, for $i \in A_2$, $(T^3\underline{u})_{i} = (T^{4}\underline{u})_{i}$, and \eqref{eqn:semi-monotonic-T} implies that $(T^3\underline{u})_{i} = (T^{4}\underline{u})_{i} = (T^5\underline{u})_{i} = \dots$.
    \item ($r > 2$) A proof is similar to the previous step, so we omit.
\end{itemize}

\subsection{Proof of Lemma~\ref{lem:Talgobasic}}
\paragraph{Part 1:} 

If $u\leq u'$ then $B^k_i(u')\subseteq B^k_i(u)$, which implies that $Tu'\leq Tu$. In turn this means that $T^2$ is monotone increasing. Let $\ul u = (v_i(\w_i))_{i\in A}$ and note that $\ul u\leq Tu$ for all $u$. In particular, $\ul u \leq T^2 \ul u$, which implies that the sequence $T^{2m}\ul u$ is monotone increasing. Since $U$ is finite, there is $m$ so that $u:=T^{2m}\ul u = T^{2(m+1)}\ul u$. Such $u$ is a fixed point of $T^2$. The preallocation defined by $u$ is individually rational because $u = T^2 u \geq \underline{u}$. 

Note $\ul u \leq T \ul u$, 
and $T^2$ is monotone increasing. Thus, $T^{2m}\ul u\leq T^{2m+1} \ul u$, and we have $u\leq T u$ for the above fixed point $u$.

\paragraph{Part 2:} 
When $u = Tu$, the set $A_1 = \{ i : u_i = (Tu)_i \}$ is equal to $A$. Then, Lemma~\ref{lem:subsetofagents}, which we prove next, implies that the preallocation defined by $u$ is an allocation.

The allocation defined by $u$ is individually rational because any point in the image of $T$ is individually rational. Similarly, if there is a coalition $S$ of size at most $k$ and an $S$-allocation $X$ such that $v_i(X_i) > u_i$ for $i \in S$, 
then this would violate $u_{i} = (Tu)_{i}$ for $i \in S$.

\subsection{Proof of Lemma~\ref{lem:subsetofagents}}

Take any $i \in A_1$. Since $(T^2u)_i = u_i = (Tu)_i$, we can find a coalition $S$ such that $i \in S$, $\abs{S} \leq k$, and has a $S$-allocation $\{Y_j : j\in S\}$ such that $v_i(Y_i)=u_i$ and $v_j(Y_j) \geq (Tu)_j$ for $j \in S \backslash \{i\}$. Observe that this $S$-allocation $\{Y_j : j\in S\}$ ensures $(Tu)_j$ for all $j \in S$, so $u_j = (T(Tu))_j \geq v_j(Y_j) \geq (Tu)_j$. Together with $u \leq Tu$, we obtain $u_j  = (Tu)_j = v_j(Y_j)$ for $j \in S$. Therefore, $S\subseteq A_1$. 

Now, consider another agent $i' \in A_1 \backslash \{i\}$. A similar argument allows us to find a coalition $S'$ with $i' \in S'$, $|S| \leq k$, and a $S'$-allocation $\{Y'_j : j \in S'\}$ such that $u_j = (Tu)_j = v_j(Y'_j)$ for all $j \in S'$. Thus, $S'$ is also a subset of $A_1$.

Because the utility functions are injective, we have $Y_j = Y'_j = v_j^{-1}(u_j)$ for $j \in S \cap S'$. Then, since $Y_j = Y'_j$ is available in both coalitions $S$ and $S'$, we have $Y_j = Y'_j \subseteq \cup_{h \in S \cap S'} \w_h$. Thus, $\{Y_j : j \in S \cap S'\}$ is a $S\cap S'$-allocation. 

Finally, due to the strict monotonicity, which follows from the monotonicity and injectivity, we can conclude that every object within the coalition $S \cap S'$ is used in the maximization $v_j(Y_j) = (Tu)_j = \max B^k_j(u)$. Therefore, the $S$-allocation $\{Y_j : j \in S \}$ can be partitioned into a $S\cap S'$-allocation and a $S \backslash S'$-allocation. Similarly, we also can partition the $S'$-allocation. Ultimately, we obtain a partition $\mathcal{P}$ of $A_1$ such that for each $S \in \mathcal{P}$, $\{v^{-1}_i(u_i) : i\in S\}$ is a $S$-allocation. Hence, $\{ v_i^{-1}(u_i) : i \in A_1\}$ is an allocation for the coalition $A_1$.

\subsection{Proof of Lemma~\ref{lem:cycle_construction}}

Since $i \in A_2$, $u_i = (T^2 u)_i < (Tu)_i$. Since $k=2$, there exists $j$ and an $\{i,j\}$-allocation $(X_i, X_j)$ such that $v_i(X_i) = (Tu)_i > u_i$ and $v_j(X_j) \geq u_j$. Clearly, $j\neq i$ because $(Tu)_i > u_i \geq v_i(\w_i)$ (some object in $X_i$ must be from $\omega_j$), where the last inequality holds because $u$ is in the image of $T$. 

In fact, $j \in A_2$. Lemma~\ref{lem:subsetofagents} showed that, in the preallocation defined by a fixed point $u$ of $T^2$, agents in $A_1$ trade among themselves. 
Since $v_j$ is injective, there is only one bundle that ensures $j$ achieves utility $u_j$, and consequently, $X_j \cap \w_i = \emptyset$. Then, by monotonicity, $X_j = w_j$, which contradicts $v_i(X_i) > u_i \geq v_i(\w_i)$.

The uniqueness of the agent $j$ holds because $v_i$ is injective. If agent $i$ achieves utility $(Tu)_i$ by consuming some objects owned by $j$, then no other pairwise trade offer $i$ the same utility $(Tu)_i$. Similarly, for agent $j$, the uniqueness of $i$ holds, because $v_j$ is injective. Agent $j$ achieves utility $u_j$ only if she consumes an object owned by agent $i$.

\bibliographystyle{ecta}
\bibliography{ec}

\end{document}